\newtheorem{theorem}{Theorem}[section] %This uses the plain theorem style.
\newtheorem{conj}[theorem]{Conjecture}
\newtheorem{cor}[theorem]{Corollary}
\newtheorem{lemma}[theorem]{Lemma}
\newtheorem{prop}[theorem]{Proposition}
\theoremstyle{definition} 
\newtheorem{definition}[theorem]{Definition}
\theoremstyle{remark} %This uses the remark theorem style.
\newtheorem{remark}[theorem]{Remark}
\DeclareMathOperator{\tr}{Tr} %trace 
\DeclareMathOperator{\rank}{rank} %rank
\DeclareMathOperator{\ebr}{ebr}
\DeclareMathOperator{\Ad}{Ad}
\newcommand{\abs}[1]{\lvert#1\rvert}
\newcommand{\bb}[1]{\mathbb{#1}}
\newcommand{\cl}[1]{\mathcal{#1}}
\newcommand{\ff}[1]{\mathfrak{#1}}
\newcommand{\inner}[2]{\left\langle {#1},{#2} \right\rangle}
\begin{document}

\title{Entanglement breaking rank and the existence of SIC POVMs}

\author{Satish K. Pandey}
\email{satish.pandey@uwaterloo.ca}
%\homepage[]{Your web page}
%\thanks{}
%\altaffiliation{}
\affiliation{Department of Pure Mathematics, University of Waterloo}
\affiliation{Institute for Quantum Computing, University of Waterloo}

\author{Vern I. Paulsen}
\email{vpaulsen@uwaterloo.ca}
%\homepage[]{Your web page}
%\thanks{}
%\altaffiliation{}
\affiliation{Department of Pure Mathematics, University of Waterloo}
\affiliation{Institute for Quantum Computing, University of Waterloo}

\author{Jitendra Prakash}
\email{jprakash@uwaterloo.ca}
%\homepage[]{Your web page}
%\thanks{}
%\altaffiliation{}
\affiliation{Department of Pure Mathematics, University of Waterloo}
\affiliation{Institute for Quantum Computing, University of Waterloo}

\author{Mizanur Rahaman}
\email{mizanurr@goa.bits-pilani.ac.in}
%\homepage[]{Your web page}
%\thanks{}
%\altaffiliation{}
\affiliation{Department of Mathematics, Birla Institute of Technology and Science, Pilani – Goa Campus}

\date{\today}

\begin{abstract} 
We introduce and study the entanglement breaking rank of an entanglement breaking channel. We show that the entanglement breaking rank of the channel $\ff Z:\bb M_d \to \bb M_d$ defined by $\ff Z(X) = \frac{1}{d+1}(X+\tr(X)\bb I_d)$ is $d^2$ if and only if there exists a symmetric informationally-complete POVM in dimension $d$.
\end{abstract}

%\pacs{}% insert suggested PACS numbers in braces on next line

\maketitle
%\tableofcontents

%\pacs{}% insert suggested PACS numbers in braces on next line

\maketitle %\maketitle must follow title, authors, abstract and \pacs

\section{Introduction}
The study of separable states is an important topic in quantum information theory and helps to shed light on the nature of entanglement. Recall that, a state $\rho\in \bb M_m\otimes \bb M_n$ is called \emph{separable} if it can be written as a finite convex combination $\rho=\sum_{i}\lambda_i\sigma_i \otimes \delta_i$, where $\sigma_i$ and $\delta_i$ are pure states. In general, a separable state can have many such representations \cite{HJW}. It is then natural to introduce the notion of \emph{optimal ensemble cardinality} \cite{DTT} or \emph{length} \cite{CD} of a separable state by defining the length to be the minimum number $\ell(\rho)$ of pure states $\sigma_i \otimes \delta_i$ required to write the separable state $\rho$ as such a convex combination. In Ref. \cite{DTT} it was shown that, in general, the inequality $\rank(\rho)\leq \ell(\rho)$ can be strict for certain classes of separable states.

A \emph{quantum channel} $\Phi:\bb M_d\to\bb M_m$ is a completely positive and trace preserving linear map. It is well known that every quantum channel admits a Choi-Kraus decomposition: \begin{align}\label{eq:Choi-Kraus-of-EB}
\Phi(X) = \sum_{k=1}^K R_kXR_k^*, \qquad X\in \bb M_d,
\end{align} where the $R_k$'s are in $\bb M_{m,d}$, and satisfy the relation $\displaystyle \sum_{k=1}^K R_k^*R_k=\bb I_d$. There is an isomorphism between bipartite quantum states and quantum channels given by \begin{align*}
\Phi \mapsto \sum_{i,j} E_{i,j}\otimes \Phi(E_{i,j}),
\end{align*} (up to a suitable normalization), where $E_{i,j}$'s are the canonical matrix units. This isomorphism carries the convex set of separable states onto a class of quantum channels, known as the \emph{entanglement breaking channels}. An equivalent criterion for a channel to be entanglement breaking \cite{HSR} is that the Choi-Kraus operators $R_k$'s appearing in Equation \eqref{eq:Choi-Kraus-of-EB}, can be chosen to have rank one. The paper \cite{HSR} contains several other important characterizations of entanglement breaking channels, which we shall recall later. Since Choi-Kraus representations are not unique, we define the \emph{entanglement breaking rank} of an entanglement breaking channel $\Phi$ to be the minimum $K$ required in such an expression \eqref{eq:Choi-Kraus-of-EB} with $R_k$'s being rank one. We denote this quantity by $\ebr(\Phi)$. It is readily seen that in the correspondence between channels and bipartite states, that the entanglement breaking rank is the length of the corresponding separable bipartite state.

The existence of \emph{symmetric informationally-complete positive operator-valued measures} (SIC POVM) in an arbitrary dimension $d$ is an open problem and is an active area of research \cite{Busch,DPS,LS,Hoggar,RBSC,GA}. This problem has connections to frame theory and the theory of $t$-designs \cite{RBSC}, and has been verified numerically for many dimensions \cite{Appleby,Grassl,SG}. This problem is equivalent to   Zauner's conjecture \cite{Zauner1, Zauner} about the existence of $d^2$ equiangular lines in $\bb C^d$, which we refer to as \emph{Zauner's weak conjecture}. Stated in terms of equiangular lines, this problem asks about the existence of $d^2$ unit vectors $\{v_i:1\leq i\leq d^2\}\subseteq \bb C^d$ such that $\abs{\inner{v_i}{v_j}}^2 = \frac{1}{d+1}$ for all $i\neq j$. We refer the reader to Ref. \cite{FHS} for a survey on this subject. \emph{Zauner's strong conjecture} posits a particular form for these vectors.

It can be shown (see Ref.\cite{Caves99}) that if such a set of vectors $\{v_i\}_{i=1}^{d^2}$ exists in a given dimension $d$, then the rank one projections $P_i$ onto the span of $v_i$ satisfy: \begin{align}\label{Z-formula}
\frac{1}{d} \sum_{i=1}^{d^2} P_i XP_i = \frac{1}{d+1}(X+\tr(X)\bb I_d), \qquad X\in \bb M_d.
\end{align} Denoting this channel by $\ff Z$, it is independently known to be entanglement breaking, and it can be shown that its entanglement breaking rank is at least $d^2$ (see Ref.\cite{LH,HH}). Thus, if a SIC POVM exists in dimension $d$, then the entanglement breaking rank of the channel $\ff Z$ is $d^2$. 

We prove the converse: if $\ebr(\ff Z)= d^2$, then there exists a SIC POVM in dimension $d$. This leads to the conclusion (Corollary ~\ref{ebr-Zauner}) that for $d\geq 2$,  $\ebr(\ff Z)=d^2$ if and only if a SIC POVM exists in dimension $d$.

This result leads us to seek, and find, other channels with the property that a statement about their entanglement breaking rank is equivalent to Zauner's weak conjecture. In particular, we show that the channel, \begin{align*}
T\circ \ff Z(X) = \frac{1}{d+1}(X^T+\tr(X)\bb I_d), \qquad X\in \bb M_d,
\end{align*} where $T$ denotes the transpose map, has entanglement breaking rank equal to $d^2$ if and only if Zauner's weak conjecture is true in dimension $d$. Often the channels of the form 
\[X\mapsto \frac{1}{d\pm1}(\tr(X)\bb I_d)\pm X^T)\]
are called Werner-Holevo channels (see Ref. \cite{Watr}, Example 3.36) and they have proven to be useful in many problems in quantum information theory.

The channel $\ff Z$ is a particular convex combination of the \emph{identity channel} $\cl I_d:\bb M_d\to\bb M_d$ defined by $\cl I_d(X)=X$ for all $X\in \bb M_d$, and the \emph{completely depolarizing channel} $\Psi_d:\bb M_d\to \bb M_d$ defined by $\Psi_d(X)=\frac{1}{d}\tr(X)\bb I_d$ for all $X\in \bb M_d$. Namely, \begin{align*}
\ff Z = \frac{1}{d+1}\cl I_d + \frac{d}{d+1}\Psi_d.
\end{align*} More generally, for $t\in \bb R$, it is known  that the map $\Phi_t = t\cl I_d+(1-t)\Psi_d$ is an entanglement breaking channel if and only if $\frac{-1}{d^2-1}\leq t\leq \frac{1}{d+1}$ (see Theorem \ref{thm:conv-comb} and references therein). In this context, we prove some perturbative results that would imply Zauner's conjecture. We prove that \emph{ebr} is lower semicontinuous. Hence, if there are channels arbitrarily near to $\ff Z$ with \emph{ebr} bounded by $d^2$, then Zauner's weak conjecture is true. 
We conjecture that, for all $t$ with $0 \leq t \leq \frac{1}{d+1}$, the entanglement breaking rank of each of these channels is $d^2$. We do not know if our conjecture is stronger than Zauner's conjecture or if it is a consequence of Zauner's conjecture.  
We verify this stronger conjecture in dimensions $d=2$ and $d=3$. 

Since determining if $\ebr(\ff Z)=d^2$ is equivalent to  Zauner's weak conjecture, it is interesting to know what sorts of bounds we can obtain on $\ebr(\ff Z)$, that do not rely on Zauner's weak conjecture.
In addition to the SIC POVM existence problem, there is also the existence problem of \emph{mutually unbiased bases} (MUB) in dimension $d$ (see Ref. \cite{DEBZ,Kantor}). In Section \ref{sec:MUB}, we show that the existence of $d+1$ MUB in dimension $d$ implies $\ebr(\ff Z) \le d(d+1)$. 

For the purpose of disambiguation, we remark that this is a revised version of the paper titled, ``Entanglement breaking rank".

\section{Preliminaries}
Let $d\in \bb N$. We shall let $\bb C^d$ denote the Hilbert space of $d$-tuples of complex numbers with the usual inner product. We shall denote the space of $d \times d$ complex matrices by $\mathbb{M}_d$. The trace of a matrix $A=[a_{i,j}]\in \mathbb{M}_d$ is defined by $\tr(A)=\sum_{i=1}^d a_{i,i}$. We let $\bb I_d$ denote the identity matrix in $\bb M_d$. The space of all linear maps from $\bb M_d$ to $\bb M_m$ is denoted by $\cl L(\bb M_d,\bb M_m)$.

We review some of the definitions and known results that we shall be needing. We begin with stating the SIC POVM existence problem in terms of equiangular vectors in $\bb C^d$.

\begin{definition}\label{def:SIC-POVM}
A set of $d^2$ matrices $\{R_i:1\leq i\leq d^2\}\subseteq \bb M_d$ is called a \emph{symmetric informationally-complete positive operator-valued measure} (SIC POVM) if, \begin{enumerate}
\item[(a)] it forms a \emph{positive operator-valued measure} (POVM), that is, each $R_i$ is positive semi-definite (written $R_i\geq 0$) and $\sum_{i=1}^{d^2}R_i=\bb I_d$,
\item[(b)] it is \emph{informationally-complete}, that is, $\text{span}\{R_i:1\leq i\leq d^2\}=\bb M_d$,
\item[(c)] it is \emph{symmetric}, that is, $\tr(R_i^2)=\lambda$ for all $1\leq i\leq d^2$, and $\tr(R_iR_j)=\mu$ for all $i\neq j$, for some constants $\lambda$ and $\mu$, and,
\item[(d)] $\rank(R_i)=1$ for all $1\leq i\leq d^2$.
\end{enumerate}
\end{definition}

The following proposition gives a characterization of a SIC POVM in terms of unit vectors.

\begin{prop}\label{SIC-iff-equiangular}
A set of matrices $\{R_i\}_{i=1}^{d^2}\subseteq \bb M_d$ is a SIC POVM if and only if there exist $d^2$  unit vectors $\{w_i\}_{i=1}^{d^2}\subseteq \bb C^d$ such that $R_i=\frac{1}{d}w_iw_i^*$ for each $1\leq i\leq d^2$, and $|\langle w_i,w_j\rangle|^2=\frac{1}{d+1}$ for all $i\neq j$.
\end{prop}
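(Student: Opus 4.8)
The plan is to prove both implications by passing between the matrices $R_i$ and the unit vectors spanning their ranges, and then extracting the relevant constants from the trace identities. For the forward direction, suppose $\{R_i\}_{i=1}^{d^2}$ is a SIC POVM. By (a) and (d) each $R_i$ is positive semi-definite of rank one, so I can write $R_i = \alpha_i w_i w_i^*$ for a unit vector $w_i \in \bb C^d$ and a scalar $\alpha_i = \tr(R_i) > 0$. The symmetry condition (c) gives $\lambda = \tr(R_i^2) = \alpha_i^2$, so all the $\alpha_i$ are equal to a common value $\alpha$; taking the trace of $\sum_i R_i = \bb I_d$ yields $d^2\alpha = d$, hence $\alpha = \tfrac1d$ and $R_i = \tfrac1d w_i w_i^*$. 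To pin down the overlaps I would square the POVM identity: since $\left(\sum_i R_i\right)^2 = \bb I_d$ has trace $d$, expanding $\sum_{i,j}\tr(R_iR_j) = d^2\lambda + d^2(d^2-1)\mu$ with $\lambda = \tfrac1{d^2}$ gives $\mu = \tfrac{1}{d^2(d+1)}$. Since $\tr(R_iR_j) = \tfrac1{d^2}\abs{\inner{w_i}{w_j}}^2 = \mu$, this forces $\abs{\inner{w_i}{w_j}}^2 = \tfrac{1}{d+1}$ for all $i \neq j$, as required.

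For the converse, suppose the unit vectors $\{w_i\}_{i=1}^{d^2}$ with $\abs{\inner{w_i}{w_j}}^2 = \tfrac{1}{d+1}$ ($i\neq j$) are given, and set $R_i = \tfrac1d w_i w_i^*$. Conditions (d) and the positivity part of (a) are immediate, and (c) is a direct computation: $\tr(R_i^2) = \tfrac1{d^2}$ and $\tr(R_iR_j) = \tfrac1{d^2}\abs{\inner{w_i}{w_j}}^2 = \tfrac{1}{d^2(d+1)}$ are both constant. For informational completeness (b), I would compute the Gram matrix of $\{w_iw_i^*\}$ in $\bb M_d$ under the Hilbert--Schmidt inner product, whose $(i,j)$ entry is $\abs{\inner{w_i}{w_j}}^2$; this matrix equals $\tfrac{d}{d+1}\bb I_{d^2} + \tfrac{1}{d+1}J$ (with $J$ the all-ones matrix), whose eigenvalues $d$ and $\tfrac{d}{d+1}$ are strictly positive, so the $d^2$ matrices $w_iw_i^*$ are linearly independent and hence span the $d^2$-dimensional space $\bb M_d$.

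The remaining, and genuinely substantive, step is the normalization $\sum_{i=1}^{d^2} R_i = \bb I_d$, equivalently $S := \sum_i w_iw_i^* = d\,\bb I_d$; this is where the equiangularity must be used in an essential way. I would proceed through two trace computations: $\tr(S) = \sum_i \norm{w_i}^2 = d^2$, and $\tr(S^2) = \sum_{i,j}\abs{\inner{w_i}{w_j}}^2 = d^2 + d^2(d^2-1)\tfrac{1}{d+1} = d^3$. Writing $\sigma_1,\dots,\sigma_d \geq 0$ for the eigenvalues of the positive semi-definite matrix $S$, these identities read $\sum_k \sigma_k = d^2$ and $\sum_k \sigma_k^2 = d^3$, whence the Cauchy--Schwarz inequality gives $(d^2)^2 = \left(\sum_k \sigma_k\right)^2 \leq d\sum_k \sigma_k^2 = d^4$ with equality. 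Equality in Cauchy--Schwarz forces all $\sigma_k$ to coincide, so each equals $d$ and $S = d\,\bb I_d$; dividing by $d$ completes (a) and the proof. I expect this final eigenvalue argument to be the crux, since every other verification is a routine rank-one trace computation, whereas the tight-frame identity is precisely the place where the maximality of the equiangular set (having exactly $d^2$ lines) is exploited.
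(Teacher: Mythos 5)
Your proof is correct, and it is worth noting that the paper itself states this proposition without proof, as part of its review of known results; so your argument fills in a gap rather than paralleling an existing one. Both directions check out: in the forward direction, the symmetry condition $\tr(R_i^2)=\lambda$ correctly forces the weights $\alpha_i$ to be equal, the trace of the POVM identity pins $\alpha=\tfrac1d$, and squaring $\sum_i R_i=\bb I_d$ yields $\mu=\tfrac{1}{d^2(d+1)}$ and hence the overlap condition. In the converse, your Gram-matrix computation for informational completeness is right (the eigenvalues $d$ and $\tfrac{d}{d+1}$ of $\tfrac{d}{d+1}\bb I_{d^2}+\tfrac{1}{d+1}J$ are positive, so the $w_iw_i^*$ are independent), and you correctly identify the genuinely substantive step: the POVM normalization $\sum_i w_iw_i^*=d\,\bb I_d$ is not automatic but follows from the two trace identities $\tr(S)=d^2$, $\tr(S^2)=d^3$ together with the equality case of Cauchy--Schwarz applied to the eigenvalues of $S$. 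This is exactly the standard ``maximal equiangular sets are tight frames'' phenomenon, and your eigenvalue argument is a clean way to get it; one could alternatively deduce it from frame-theoretic results or from the fact that $S$ commutes with a spanning set, but your route is elementary and self-contained, which suits the paper's context well.
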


Because of Proposition \ref{SIC-iff-equiangular}, we will say that a set $\{w_i\}_{i=1}^{d^2}\subseteq \bb C^d$ of $d^2$ unit vectors generates a SIC POVM if $|\inner{w_i}{w_j}|^2 = \frac{1}{d+1}$ for all $i\neq j$. 

The constant $\frac{1}{d+1}$ appearing in Proposition \ref{SIC-iff-equiangular} can be derived independently.

\begin{prop}\label{prop:constant-angle}
If $\{w_i\}_{i=1}^{d^2}\subseteq \bb C^d$ is a set of unit vectors such that $|\inner{w_i}{w_j}|^2=c\neq 1$ for all $i\neq j$, then $c=\frac{1}{d+1}$.
\end{prop}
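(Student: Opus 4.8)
The plan is to pass from the unit vectors $w_i$ to the rank-one projections $P_i = w_iw_i^* \in \bb M_d$ and to work in the Hilbert--Schmidt (trace) inner product $\inner{A}{B} = \tr(A^*B)$ on $\bb M_d$. Under this translation the hypothesis becomes $\tr(P_iP_j) = \abs{\inner{w_i}{w_j}}^2 = c$ for $i \neq j$, while $\tr(P_i^2) = \tr(P_i) = 1$ for each $i$. The whole argument rests on the Gram matrix $G = [\tr(P_iP_j)] \in \bb M_{d^2}$, which by hypothesis equals $(1-c)\bb I_{d^2} + cJ$, where $J$ denotes the $d^2 \times d^2$ all-ones matrix.

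First I would show that the $P_i$ form a basis of $\bb M_d$. Since $J$ has eigenvalues $d^2$ (once) and $0$ (with multiplicity $d^2-1$), the matrix $G$ has eigenvalues $1 + (d^2-1)c$ and $1-c$. The latter is nonzero because $c \neq 1$, and the former is $\geq 1 > 0$ because $c = \abs{\inner{w_i}{w_j}}^2 \geq 0$; hence $G$ is invertible. Invertibility of the Gram matrix forces $\{P_i\}_{i=1}^{d^2}$ to be linearly independent in the trace inner product, and as there are exactly $d^2 = \dim \bb M_d$ of them, they form a basis.

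The key step is then to expand the identity in this basis, $\bb I_d = \sum_{j} c_j P_j$, and to pin down the coefficients. Pairing with $P_i$ gives $1 = \tr(P_i) = \sum_j c_j \tr(P_iP_j)$, i.e.\ $G\mathbf{c} = \mathbf{1}$ with $\mathbf{c} = (c_j)$ and $\mathbf{1}$ the all-ones vector; since every row of $G$ has the same sum, $\mathbf{1}$ is an eigenvector and the unique solution is the constant vector $c_j = (1+(d^2-1)c)^{-1}$. Thus $\bb I_d = (1 + (d^2-1)c)^{-1}\sum_j P_j$, and taking the trace of both sides yields $d = (1+(d^2-1)c)^{-1} d^2$, so that $1 + (d^2-1)c = d$ and hence $c = \frac{d-1}{d^2-1} = \frac{1}{d+1}$. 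The point that needs care --- and the reason a naive Welch-bound estimate only delivers the inequality $c \geq \frac{1}{d+1}$ --- is the passage from linear independence to the exact tight-frame identity $\sum_j P_j \propto \bb I_d$: it is the invertibility of $G$ together with the self-consistency of the expansion of $\bb I_d$ that upgrades the bound to an equality.
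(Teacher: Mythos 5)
Your proof is correct and complete. One thing worth knowing: the paper never actually proves Proposition \ref{prop:constant-angle}; it is stated as a fact that ``can be derived independently,'' so there is no in-paper argument to compare yours against. What you wrote is the natural self-contained argument, and it is essentially a cleaner packaging of computations the paper performs elsewhere under stronger hypotheses. In the converse half of Proposition \ref{Channel-to-Zauner} (and again in the no-go result for $\Phi_t$), linear independence of the rank-one operators is inherited from the fact that the number of Choi--Kraus operators equals the Choi rank, and the resolution of the identity comes from unitality of the channel; equating coefficients then fixes the constants. You instead derive both ingredients from equiangularity alone: the Gram matrix $G=(1-c)\bb I_{d^2}+cJ$ is invertible precisely because $c\neq 1$ and $c\geq 0$, which makes $\{P_i\}_{i=1}^{d^2}$ a basis of $\bb M_d$; solving $G\mathbf{c}=\mathbf{1}$ then forces the tight-frame identity $\sum_{j=1}^{d^2} P_j = d\,\bb I_d$, from which $c=\frac{1}{d+1}$ follows by taking traces. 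Your closing remark is also accurate and worth keeping: Cauchy--Schwarz applied to $S=\sum_j P_j$ (the Welch-bound computation $\tr(S^2)\geq \tr(S)^2/d$) yields only the inequality $c\geq\frac{1}{d+1}$, with equality exactly when $S$ is a multiple of $\bb I_d$, and it is your basis/expansion step that establishes this proportionality and hence the equality.
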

 
We now state the SIC POVM existence conjecture in terms of equiangular vectors, which we shall call \emph{Zauner's weak conjecture}.

\begin{conj}[Zauner's weak conjecture]\label{conj-zauner}
For each positive integer $d\geq 2$, there exist $d^2$ unit vectors $\{w_i\}_{i=1}^{d^2}$ in $\bb C^d$ such that $|\langle w_i,w_j\rangle|^2=\frac{1}{d+1}$ for all $i\neq j$.
\end{conj}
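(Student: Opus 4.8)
The plan is to attack the \emph{existence} of the $d^2$ equiangular lines by reducing it to the existence of a single vector under a group action, following the group-covariant strategy underlying Zauner's strong conjecture. First I would introduce the Weyl--Heisenberg group acting on $\bb C^d$, generated by the clock operator $Z e_j = \omega^j e_j$ and the shift operator $X e_j = e_{j+1}$ (indices modulo $d$, $\omega = e^{2\pi i/d}$), together with the associated displacement operators $D_{a,b}$ for $(a,b)\in \bb Z_d\times\bb Z_d$, each a unitary scalar multiple of $X^a Z^b$. Since each $D_{a,b}$ is unitary, for any unit vector $w$ the orbit $\{D_{a,b}\,w : (a,b)\in\bb Z_d\times\bb Z_d\}$ consists of $d^2$ unit vectors. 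The decisive reduction is that this orbit satisfies the hypothesis of Conjecture~\ref{conj-zauner} as soon as $w$ is a \emph{fiducial} vector, meaning $|\inner{w}{D_{a,b}\,w}|^2 = \tfrac{1}{d+1}$ for every $(a,b)\neq(0,0)$. Indeed, using the composition law $D_{a,b}^*D_{c,e} = \gamma\, D_{c-a,\,e-b}$ for a scalar $\gamma$ of modulus one, one gets $|\inner{D_{a,b}w}{D_{c,e}w}| = |\inner{w}{D_{c-a,\,e-b}\,w}|$, so every off-diagonal overlap of the orbit equals the overlap of $w$ with a nonzero displacement of itself. Thus exhibiting one fiducial vector in each dimension would prove the conjecture.

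The next step is to turn the fiducial condition into a finite system of polynomial equations in the $2d$ real coordinates of $w$ and to produce a solution. Here Proposition~\ref{prop:constant-angle} is a genuine simplification: one need not prescribe the value $\tfrac{1}{d+1}$ in advance, but only demand that the $d^2-1$ quantities $|\inner{w}{D_{a,b}\,w}|$ be equal to one another, the common value then being forced. To make the system tractable I would impose symmetry, seeking $w$ in the fixed-point set of the order-three Zauner unitary, an element of the Clifford group normalizing the Weyl--Heisenberg group; this cuts the search space by roughly a factor of three and, empirically, contains a fiducial in every dimension examined so far. A solution of the reduced system, together with Proposition~\ref{SIC-iff-equiangular}, would yield an explicit SIC POVM.

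The main obstacle is precisely this last step, and it is the reason the conjecture is open: no argument is known that the fiducial polynomial system is solvable in \emph{every} dimension $d$. The equations are highly nonlinear, solutions have been found only in finitely many (albeit many) dimensions, and there is neither a uniform construction nor any dimension-independent existence theorem---say, a degree-theoretic or cohomological guarantee that the variety cut out by the flatness equations is nonempty. The present paper suggests two further routes, both of which relocate rather than remove this difficulty: one may try to prove $\ebr(\ff Z)=d^2$ directly, but by Corollary~\ref{ebr-Zauner} this is logically equivalent to the conjecture; or one may exploit the lower semicontinuity of $\ebr$ established later in the paper, producing channels arbitrarily close to $\ff Z$ with entanglement breaking rank at most $d^2$ and passing to the limit. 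A complete proof therefore requires genuinely new input---an explicit dimension-uniform fiducial, a non-constructive existence theorem for the polynomial system, or such a deformation argument---and it is at the identification of this input that the plan, honestly, stops short.
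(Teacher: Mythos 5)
The statement you were asked to prove is Conjecture~\ref{conj-zauner} itself, and the first thing to say plainly is that the paper contains no proof of it: the SIC POVM existence problem is open, and the paper's contribution consists of \emph{equivalences} and \emph{reductions} (Corollary~\ref{ebr-Zauner}, Theorem~\ref{Channel-to-Zauner-2}, the semicontinuity route of Proposition~\ref{prop:ebr-lower-semicts}), not an existence argument. So there is no paper proof to compare against, and no blind attempt could legitimately close the statement. To your credit, you say this yourself. What you have actually written is a correct and standard reduction, not a proof: the unitarity of the displacement operators together with the composition law $D_{a,b}^*D_{c,e}=\gamma\,D_{c-a,\,e-b}$ (with $|\gamma|=1$) does show that the orbit of a fiducial vector $w$ is a set of $d^2$ equiangular unit vectors, and Proposition~\ref{prop:constant-angle} correctly lets you demand only equality of the overlaps, with the value $\frac{1}{d+1}$ then forced. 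This is precisely the content of Conjecture~\ref{conj-strng-zauner}, so your plan amounts to: prove the strong conjecture to obtain the weak one.

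The genuine gap, which you name honestly, is the entire existence step: no argument is supplied, or known, that the fiducial system of $d^2-1$ polynomial equations has a solution in every dimension. Two caveats sharpen this. First, your reduction trades the weak conjecture for one that is formally at least as hard --- the paper explicitly notes it is unknown whether the weak conjecture implies the strong one --- and the further restriction to the fixed-point space of the order-three Zauner unitary is an additional heuristic with no known guarantee, so a failure of the covariant search in some dimension would not even refute Conjecture~\ref{conj-zauner}. Second, your description of the deformation route is accurate but should be stated with the right logical direction: Proposition~\ref{prop:ebr-lower-semicts} gives $\ebr(\ff Z)\leq \liminf_{t\to\frac{1}{d+1}^-}\ebr(\Phi_t)$, and since $\mathrm{cr}(\ff Z)=d^2\leq \ebr(\ff Z)$ always holds, it would suffice to bound $\ebr(\Phi_t)\leq d^2$ for channels approaching $\ff Z$; but producing such bounds is exactly Conjecture~\ref{conj:Zauner-continuous}, which the paper verifies only in dimensions $d=2$ and $d=3$ (Theorems~\ref{thm:cont-family-dim2} and~\ref{thm:cont-family-dim3}), where SICs were already known. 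In short: your proposal is a faithful map of the problem and its known reductions, every step you do assert is correct, and the missing step is not a repairable oversight but the open problem itself.
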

The existence of a SIC POVM for each dimension $d\geq 2$ is still an open problem, however in some specific dimensions $(d=1-21,24, 28, 30, 31, 35 \ \text{etc.})$ concrete solutions of SIC sets have been found. See Ref. \cite{Scott2017} for current progress in finding these solutions.
Many of the numerical solutions to Zauner's weak conjecture have been found assuming a covariance property. We shall focus on a particular group covariance. 

\begin{definition}
For a positive integer $d$, let $\mathbb{Z}_d = \{0, 1, 2, \cdots, d-1 \}$ be the cyclic group of order $d$. Let $\omega=\exp\left(\frac{2\pi i}{d}\right)$ be a $d$-th root of unity. Define the following two $d\times d$ matrices \begin{align*}
U=\begin{bmatrix}
0 & 0 & \cdots & 1  \\
1 & 0 & \cdots & 0 \\
0 & \ddots & 0 & 0 \\
0 & \cdots & 1 & 0 
\end{bmatrix}, \qquad V=\begin{bmatrix}
1 & 0 & \cdots & 0  \\
0 & \omega & \cdots & 0 \\
0 & 0 & \ddots & 0 \\
0 & 0 & \cdots & \omega^{d-1} 
\end{bmatrix}.
\end{align*} Thus, $U$ is the (forward) cyclic shift, and $V$ is the diagonal matrix with all the $d$-th roots of unity on the diagonal. For any $(i,j)\in \mathbb{Z}_d\times\mathbb{Z}_d$, define the \emph{discrete Weyl matrices} as $W_{i,j}=U^{i}V^{j}$. (For properties of the discrete Weyl matrices, we refer the reader to Ref. \cite{Watr} , {Section 4.1.2}.)
\end{definition}

We now write down Zauner's strong conjecture. It is not known whether Zauner's weak conjecture implies this stronger conjecture.

\begin{conj}[Zauner's strong conjecture]\label{conj-strng-zauner}
For each positive integer $d\geq 2$, there exists a unit vector $w\in \mathbb{C}^d$ such that the $d^2$ vectors, \begin{align*}
\{W_{i,j}w: (i,j)\in \mathbb{Z}_d\times\mathbb{Z}_d\} \subseteq \bb C^d,
\end{align*} where $W_{i,j}$ are the discrete Weyl matrices, generate a SIC POVM. In this case, $w$ is called a \emph{fiducial vector}.
\end{conj}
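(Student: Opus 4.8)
The plan is to use the group covariance to collapse the apparently $d^4$ many equiangularity conditions into a single small family. Writing $W_{a,b}=U^aV^b$, the discrete Weyl matrices satisfy the Weyl commutation relation $VU=\omega UV$, from which a short computation gives that $W_{i,j}^*W_{k,l}$ is a unimodular scalar multiple of $W_{k-i,\,l-j}$. Consequently, for any unit vector $w$,
\begin{align*}
	|\inner{W_{i,j}w}{W_{k,l}w}|^2 = |\inner{w}{W_{i,j}^*W_{k,l}w}|^2 = |\inner{w}{W_{k-i,\,l-j}w}|^2 .
\end{align*}
By Proposition \ref{SIC-iff-equiangular}, the requirement that $\{W_{i,j}w\}$ generate a SIC POVM therefore reduces to the single family of \emph{fiducial equations}
\begin{align*}
	|\inner{w}{W_{a,b}w}|^2 = \frac{1}{d+1}, \qquad (a,b)\in \bb Z_d\times\bb Z_d\setminus\{(0,0)\}.
\end{align*}
First I would check that $\{W_{a,b}\}$ forms an orthogonal basis of $\bb M_d$ under the trace inner product, so that these $d^2-1$ equations together with $\norm{w}=1$ are exactly the constraints forcing the orbit projections into equiangular lines; informational completeness is then automatic.

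The second step is to shrink the search space using the conjectured symmetry. The normalizer of the Heisenberg group inside the unitary group, the Clifford group, permutes fiducial vectors, and Zauner's ansatz is that $w$ may be taken to be an eigenvector of a canonical order-three Clifford element $Z$. I would diagonalize $Z$, restrict $w$ to the relevant eigenspace (of dimension approximately $d/3$), and substitute into the fiducial equations, thereby reducing the problem to a system of real polynomial equations in far fewer unknowns. For small $d$ this system is solvable by hand: I would carry out $d=2$ and $d=3$ explicitly to exhibit genuine fiducials and confirm the ansatz, matching the concrete solutions catalogued in Ref. \cite{Scott2017}.

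The hard part, and the reason this is stated as a conjecture rather than a theorem, is producing a solution valid for \emph{every} $d$. The reduced fiducial equations form a polynomial system whose solutions are high-degree algebraic numbers lying in abelian extensions of real quadratic fields; the degrees of these extensions grow rapidly with $d$, and no single parametrized family is known to solve them uniformly. The covariance reduction and the Zauner-symmetry reduction are essentially routine, but the final existence step has no known general proof and is entangled with deep questions in algebraic number theory, notably Stark's conjecture and Hilbert's twelfth problem. I therefore expect this proposal to settle the conjecture only in those low dimensions where the reduced polynomial system can be solved explicitly, with the uniform existence statement remaining the central obstruction.
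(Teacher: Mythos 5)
You were asked to prove Conjecture~\ref{conj-strng-zauner}, but this is a genuinely open conjecture: the paper states it without proof (indeed, the paper's entire contribution is to recast the \emph{weak} form, Conjecture~\ref{conj-zauner}, as a statement about $\ebr(\ff Z)$), and no proof exists in the literature. So there is no proof in the paper to compare against, and your proposal --- as you candidly concede in your final paragraph --- does not supply one. The reductions you perform are correct as far as they go: the Weyl relation $VU=\omega UV$ does give that $W_{i,j}^*W_{k,l}$ is a unimodular scalar multiple of $W_{k-i,l-j}$, so the equiangularity of the orbit $\{W_{i,j}w\}$ collapses to the $d^2-1$ fiducial equations $|\inner{w}{W_{a,b}w}|^2=\frac{1}{d+1}$ for $(a,b)\neq(0,0)$; and by Proposition~\ref{SIC-iff-equiangular} equiangularity alone characterizes a SIC POVM, so informational completeness is indeed automatic (the $d^2$ equiangular rank-one projections are linearly independent). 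None of this, however, constitutes progress toward existence.

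Two concrete gaps remain. First, your restriction of $w$ to an eigenspace of an order-three Clifford element is not a reduction but an additional unproven ansatz: a priori a fiducial vector, if one exists, need not carry that extra symmetry, so this step \emph{strengthens} the conjecture rather than simplifying it (it is consistent with all known numerical solutions, but that is evidence, not proof). Second, and decisively, solving the reduced polynomial system ``by hand'' in $d=2$ and $d=3$ --- which the paper in effect does, since the endpoints $t=\frac{1}{d+1}$ of Theorems~\ref{thm:cont-family-dim2} and~\ref{thm:cont-family-dim3} recover the known fiducial vectors --- establishes nothing for general $d$. The uniform existence step is precisely the open problem, and your own closing paragraph correctly identifies that no known technique (the number-theoretic structure of the solutions notwithstanding) closes it. The proposal should therefore be read as a correct covariance reduction plus a research program, not as a proof of the statement.
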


We now review some of the standard results from the theory of quantum channels. The proofs may be found in Ref. \cite{Choi,Watr}. Recall that for a linear map $\Phi:\bb M_d\to \bb M_m$, there is a unique bipartite element $C_{\Phi} = [\Phi(E_{i,j})]_{i,j=1}^d$ in $\mathbb{C}^{dm}\otimes\mathbb{C}^{dm}$. Moreover,  $\Phi$ is a completely positive (CP) linear map if and only if  $C_{\Phi}$ is a positive semidefinite matrix. The matrix $C_\Phi$ is called the Choi matrix of the map $\Phi$. We will denote the rank of this Choi matrix as $\mathrm{cr}(\Phi)$. As described in the introduction, every quantum channel admits a Choi-Kraus representation given in the equation \ref{eq:Choi-Kraus-of-EB}. Equivalently the Choi rank of $\Phi$, $\mathrm{cr}(\Phi)$, is the minimum $K$ in the equation \ref{eq:Choi-Kraus-of-EB} among all possible Choi-Kraus representations.

We now recall the definition and key facts about entanglement breaking channels.
\begin{definition}
A quantum channel $\Phi:\bb M_d\to \bb M_m$ is called \emph{entanglement breaking} if for all $n\in \bb N$, the $n$-th amplification map \begin{align*}
\Phi\otimes \cl I_n : \bb M_d\otimes \bb M_n \to \bb M_m\otimes \bb M_n
\end{align*} maps all states (entangled or not) in $\bb M_d\otimes \bb M_n$ to separable states in $\bb M_m\otimes \bb M_n$.
\end{definition}

\begin{theorem}\cite{HSR}\label{EB-equivalences}
Let $\Phi:\bb M_d\to \bb M_m$ be a quantum channel. Then the following statements are equivalent. \begin{enumerate}
\item[(a)] The map $\Phi$ is entanglement breaking.
\item[(b)] The Choi-matrix $C_{\Phi}$ is a separable state.
\item[(c)] There is a Choi-Kraus representation \ref{eq:Choi-Kraus-of-EB} of $\Phi$, such that each $R_k$ has rank one.
\item[(d)] The linear map $\Phi{^\prime}\circ\Phi:\bb M_d\to \bb M_k$ is completely positive for every positive map $\Phi^{\prime}:\bb M_m\to \bb M_k$.
\item[(e)] The linear map $\Phi\circ\Phi{^\prime}:\bb M_k\to \bb M_m$ is completely positive for every positive map $\Phi^{\prime}:\bb M_k\to \bb M_d$.
\end{enumerate}

Moreover, the set of all entanglement breaking maps $\Phi:\bb M_d\to \bb M_m$ forms a convex set in $\cl L(\bb M_d, \bb M_m)$.
\end{theorem}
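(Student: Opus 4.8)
The plan is to prove the five conditions equivalent by running a cycle through (a), (b), (c) that uses only the Choi--Jamio{\l}kowski correspondence, and then attaching (d) and (e) to this cycle; the concluding convexity assertion will fall out of the characterization (b). For (a)$\Rightarrow$(b), I would apply the amplification $\Phi\otimes\cl I_d$ to the (unnormalized) maximally entangled vector state $\Omega\Omega^*$ with $\Omega=\sum_i e_i\otimes e_i$. Since this image is, up to a harmless flip of the two tensor legs and normalization, exactly the Choi matrix $C_\Phi$, and since separability is preserved under that flip, the hypothesis that $\Phi$ is entanglement breaking forces $C_\Phi$ to be a separable state.

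For (b)$\Rightarrow$(c), I would expand the separable Choi matrix as a sum of pure product states, $C_\Phi=\sum_k \xi_k\xi_k^*$ with $\xi_k=a_k\otimes b_k$, and invoke the vectorization dictionary identifying a vector in $\bb C^d\otimes\bb C^m$ with a matrix in $\bb M_{m,d}$: a product vector $a_k\otimes b_k$ corresponds precisely to a rank-one matrix $R_k$, and the decomposition of $C_\Phi$ transcribes into a Choi--Kraus representation $\Phi(X)=\sum_k R_kXR_k^*$ with each $R_k$ of rank one (the only bookkeeping being the trace-preservation relation $\sum_k R_k^*R_k=\bb I_d$). For (c)$\Rightarrow$(a), writing $R_k=x_ky_k^*$ I would compute, for every $n$ and every state $\rho$, that $(\Phi\otimes\cl I_n)(\rho)=\sum_k (x_kx_k^*)\otimes\sigma_k$ with $\sigma_k=(y_k^*\otimes\bb I_n)\rho(y_k\otimes\bb I_n)\geq 0$; this is manifestly separable, so $\Phi$ is entanglement breaking. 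This closes the core cycle without any deep input.

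Next I would attach (d). The implication (a)$\Rightarrow$(d) follows from the measure-and-prepare form $\Phi(X)=\sum_k\inner{y_k}{Xy_k}\,x_kx_k^*$: composing gives $\Phi'\circ\Phi(X)=\sum_k \tr(y_ky_k^*X)\,\Phi'(x_kx_k^*)$, a sum of maps $X\mapsto\tr(BX)A$ with $B=y_ky_k^*\geq 0$ and $A=\Phi'(x_kx_k^*)\geq 0$; each such map has Choi matrix $B^T\otimes A\geq 0$, hence is completely positive, and a sum of completely positive maps is completely positive. The reverse implication (d)$\Rightarrow$(b) carries the real weight. Using the identity $(\cl I_d\otimes\Phi')(C_\Phi)=C_{\Phi'\circ\Phi}$, hypothesis (d) says that $(\cl I_d\otimes\Phi')(C_\Phi)\geq 0$ for every positive map $\Phi'$, and by the Horodecki positive-map criterion for separability this is exactly the assertion that $C_\Phi$ is separable, i.e.\ (b). I expect this appeal to the Horodecki criterion to be the main obstacle, as it is the one genuinely nontrivial external ingredient; every other step is a direct verification.

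Finally, I would obtain (e) by duality. Passing to Hilbert--Schmidt adjoints, $\Phi\circ\Phi'$ is completely positive if and only if $(\Phi')^*\circ\Phi^*$ is, and as $\Phi'$ ranges over all positive maps $\bb M_k\to\bb M_d$ so does $(\Phi')^*$ over all positive maps $\bb M_d\to\bb M_k$; thus (e) for $\Phi$ is equivalent to (d) for $\Phi^*$, hence to $\Phi^*$ being entanglement breaking. Since the adjoint sends a rank-one Kraus family $R_k=x_ky_k^*$ to the rank-one family $R_k^*=y_kx_k^*$, criterion (c) shows that $\Phi$ is entanglement breaking exactly when $\Phi^*$ is, which closes the last loop. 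For the convexity statement I would use (b) directly: the assignment $\Phi\mapsto C_\Phi$ is linear, the separable states form a convex set, and a convex combination of channels is again a channel, so a convex combination of entanglement breaking maps has separable Choi matrix and is therefore entanglement breaking.
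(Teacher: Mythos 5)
The paper itself does not prove this theorem: it is quoted as background directly from Horodecki--Shor--Ruskai \cite{HSR}, so there is no internal proof to compare against. Your argument is correct and is essentially the standard proof from that reference: the cycle (a)$\Leftrightarrow$(b)$\Leftrightarrow$(c) via the Choi--Jamio\l{}kowski/vectorization dictionary, the Horodecki positive-map separability criterion as the single deep external ingredient for (d)$\Rightarrow$(b), and Hilbert--Schmidt duality reducing (e) to (d); the convexity claim from linearity of $\Phi\mapsto C_\Phi$ is also the standard route. One point deserves an explicit sentence in a polished write-up: in the duality step you apply the equivalence of (d) with the existence of a rank-one Kraus decomposition to $\Phi^*$, which is unital but in general not trace-preserving, whereas the theorem as stated covers only channels. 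This is harmless because the implications (b)$\Leftrightarrow$(c)$\Leftrightarrow$(d) never use trace preservation --- the Choi matrix of any nonzero CP map can be normalized before invoking the Horodecki criterion, and product-vector decompositions of $C_{\Phi^*}$ still transcribe into rank-one Kraus operators for $\Phi^*$, hence for $\Phi$ --- but the justification should be stated rather than left implicit.
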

As mentioned in the introduction, the minimum number of rank one Kraus operators needed to express an entanglement breaking map $\Phi$ is called the \emph{entanglement breaking rank} ($ebr(\Phi)$) of $\Phi$. Evidently one has
\begin{align}\label{eq:choi-ebr-rank}
d\leq \mathrm{cr}(\Phi)\leq \ebr(\Phi).
\end{align} The first inequality follows from the fact that an entanglement breaking channel cannot have less than $d$ Choi-Kraus operators in its Choi-Kraus representation (see Theorem 6 in Ref. \cite{HSR}).

For $t\in \bb R$, we define the map $\Phi_t:\bb M_d\to \bb M_d$ to be $\Phi_t = t\cl I_d+(1-t)\Psi_d$. When $t=\frac{1}{d+1}$, we get the channel $\ff Z$. We finish this section by mentioning the following result which states the values of $t$ for which the map $\Phi_t$ becomes a channel, and an entanglement breaking channel. 

\begin{theorem}\cite{LH,HH,FFMV}\label{thm:conv-comb}
For $t\in \bb R$, consider the map $\Phi_t$ on $\bb M_d$ and let $T:\bb M_d\to \bb M_d$ be the transpose map. Then, \begin{enumerate}
\item[(a)] $\Phi_t$ is a channel (called a \emph{depolarizing channel}) if and only if $t\in \left[\frac{-1}{d^2-1},1\right]$,
\item[(b)] $T\circ\Phi_t$ is a channel (called a \emph{transpose depolarizing channel}) if and only if $t\in \left[\frac{-1}{d-1},\frac{1}{d+1} \right]$,
\item[(c)] $\Phi_t$ and $T\circ\Phi_t$ are entanglement breaking if and only if $t\in \left[\frac{-1}{d^2-1},\frac{1}{d+1}\right]$.
\end{enumerate}
\end{theorem}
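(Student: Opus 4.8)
The plan is to push everything down to the Choi matrix, since all three properties in question are detected there. First I observe that for every real $t$ the map $\Phi_t$ is both trace-preserving and Hermiticity-preserving (and so is $T\circ\Phi_t$, since $T$ preserves both), so the only content of (a) and (b) is complete positivity, i.e.\ positivity of the Choi matrix, and by Theorem~\ref{EB-equivalences}(b) the content of (c) is separability of the Choi matrix. Writing $E_{i,j}$ for the matrix units and using $\Phi_t(E_{i,j}) = tE_{i,j} + \frac{1-t}{d}\delta_{i,j}\bb I_d$, I would compute directly
\begin{align*}
C_{\Phi_t} = \sum_{i,j}E_{i,j}\otimes\Phi_t(E_{i,j}) = td\, P_\Omega + \frac{1-t}{d}\,\bb I_{d^2},
\end{align*}
where $P_\Omega$ is the rank-one projection onto the maximally entangled vector $\Omega = \frac{1}{\sqrt d}\sum_i e_i\otimes e_i$, and likewise
\begin{align*}
C_{T\circ\Phi_t} = tF + \frac{1-t}{d}\,\bb I_{d^2},
\end{align*}
where $F = \sum_{i,j}E_{i,j}\otimes E_{j,i}$ is the swap operator. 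The one structural fact I would record at the outset is that partially transposing $C_{\Phi_t}$ on the second tensor factor replaces each block $\Phi_t(E_{i,j})$ by $\Phi_t(E_{i,j})^T$, so that $(C_{\Phi_t})^{T_2} = C_{T\circ\Phi_t}$; this is what will tie the two families together in part (c).

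For (a) and (b) I would simply read off spectra. Since $P_\Omega$ has eigenvalue $1$ once and $0$ with multiplicity $d^2-1$, the matrix $C_{\Phi_t}$ has spectrum $\bigl\{\,td+\tfrac{1-t}{d},\ \tfrac{1-t}{d}\,\bigr\}$; requiring both to be nonnegative gives exactly $t\ge\frac{-1}{d^2-1}$ and $t\le 1$, which is (a). Since $F$ equals $+1$ on the symmetric subspace and $-1$ on the antisymmetric subspace (nonempty for $d\ge 2$), $C_{T\circ\Phi_t}$ has spectrum $\bigl\{\,t+\tfrac{1-t}{d},\ -t+\tfrac{1-t}{d}\,\bigr\}$, and nonnegativity of both gives $t\ge\frac{-1}{d-1}$ and $t\le\frac{1}{d+1}$, which is (b).

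For (c) the easy direction is necessity. If either Choi matrix is separable then it is PPT, and through the identity $(C_{\Phi_t})^{T_2}=C_{T\circ\Phi_t}$ this forces \emph{both} $C_{\Phi_t}\ge 0$ and $C_{T\circ\Phi_t}\ge 0$ simultaneously; intersecting the two ranges from (a) and (b), and using $d^2-1>d-1$ and $\frac{1}{d+1}<1$, collapses the admissible set to $t\in\bigl[\frac{-1}{d^2-1},\frac{1}{d+1}\bigr]$. The substantive direction, and the step I expect to be the main obstacle, is the converse: showing that for every $t$ in this interval both $C_{\Phi_t}$ and $C_{T\circ\Phi_t}$ are genuinely \emph{separable}, not merely PPT. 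Here I would exploit that $\Phi_t$ is unitarily covariant, $\Phi_t(UXU^*)=U\Phi_t(X)U^*$, so that $C_{\Phi_t}$ commutes with every $U\otimes\overline U$ and $C_{T\circ\Phi_t}$ with every $U\otimes U$; these are precisely the isotropic and Werner states. One then establishes separability either by invoking the known classification that isotropic and Werner states are separable exactly when they are PPT, or, to keep the argument self-contained, by writing down one explicit product-state decomposition and twirling it over the unitary group to sweep out the entire interval. Since the endpoint $t=\frac{1}{d+1}$ is the channel $\ff Z$, I would make sure this separability argument does not covertly assume the existence of a SIC POVM, so as not to render the proof circular with the paper's main theorem; the twirling construction is attractive precisely because it avoids any such assumption.
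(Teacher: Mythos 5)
Your argument is correct, but note that the paper does not prove this theorem at all: it is stated with citations to Refs.~\cite{LH,HH,FFMV}, so there is no internal proof to compare against, and what you have reconstructed is essentially the argument living in those references. Parts (a) and (b) are complete as written: trace preservation holds for every $t$, the Choi matrices are $C_{\Phi_t}=td\,P_\Omega+\frac{1-t}{d}\bb I_{d^2}$ and $C_{T\circ\Phi_t}=tF+\frac{1-t}{d}\bb I_{d^2}$, and the spectral computations give exactly the two stated intervals. The necessity half of (c), via $(C_{\Phi_t})^{T_2}=C_{T\circ\Phi_t}$ and intersecting the intervals from (a) and (b), is also correct. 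The one place where you have a sketch rather than a proof is the sufficiency half of (c), which you rightly flag as the substantive step; for the record, the twirling route you propose does close it, and the parameter count comes out exactly right. Twirling $\phi\phi^*\otimes\psi\psi^*$ over $U\otimes U$ yields (by Schur--Weyl, since the commutant of $\{U\otimes U\}$ is spanned by $\bb I_{d^2}$ and $F$) the unique state on the Werner line with $\tr(\rho F)=|\inner{\phi}{\psi}|^2\in[0,1]$, while $\tr\bigl(\tfrac{1}{d}C_{T\circ\Phi_t}F\bigr)=\frac{t(d^2-1)+1}{d}$ runs over $[0,1]$ precisely as $t$ runs over $\bigl[\frac{-1}{d^2-1},\frac{1}{d+1}\bigr]$; hence every such $C_{T\circ\Phi_t}$ is an average of product states, and this average is genuinely separable because the separable set is compact and convex (or invoke Carath\'eodory to replace the integral by a finite convex combination, as the paper's definition of separability requires). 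Separability of $C_{\Phi_t}$ then follows either by the analogous $U\otimes\overline{U}$ twirl (the fidelity $\tr(\rho P_\Omega)=t+\frac{1-t}{d^2}$ sweeps $[0,\frac{1}{d}]$, matching the product-state range $\frac{1}{d}|\inner{\overline{\phi}}{\psi}|^2$), or more simply from the observation that the partial transpose of a product state is a product state, so separability of $C_{T\circ\Phi_t}$ transfers to $C_{\Phi_t}=(C_{T\circ\Phi_t})^{T_2}$. Finally, your worry about circularity is well placed and well resolved: the twirl proves separability of $C_{\ff Z}$ with no control on the number of product terms, so it shows $\ff Z$ is entanglement breaking without implying $\ebr(\ff Z)=d^2$, i.e., without assuming a SIC POVM exists.
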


When $t\in \left(\frac{-1}{d^2-1},1\right)$, it is easy to see that the Choi-rank of $\Phi_t$ is $d^2$. Indeed, when $t=0$, $\Phi_t$ is the completely depolarizing map $\Psi_d$, and since $C_{\Psi_d} = [\Psi_d(E_{i,j})]=\frac{1}{d}\bb I_{d^2}$, it follows that $\mathrm{cr}(\Psi_d)=d^2$. When $t\neq 0$, but is in the aforementioned interval, the Choi-matrix of $\Phi_t$ is $C_{\Phi_t} = t[E_{i,j}]+\frac{1-t}{d}\bb I_{d^2}$. Let $\xi\in \bb C^{d^2}$. If $C_{\Phi_t}(\xi)=0$, then $[E_{i,j}]\xi = -\frac{1-t}{td}\xi$. But the only eigenvalues of $[E_{i,j}]$ are $0$ and $d$, which implies that $\xi=0$. Thus $\mathrm{cr}(\Phi_t) = \rank(C_{\Phi_t})=d^2$.

\section{Entanglement Breaking Rank}
The main goal of this section is to establish the fact that determining the entanglement breaking rank of the channel $\ff Z=\frac{1}{d+1}\cl I_d + \frac{d}{d+1}\Psi_d$ to be $d^2$ is equivalent  to the existence problem of SIC POVMs. 
By Proposition \ref{thm:conv-comb}, it follows that the quantum channel $\ff Z$ is entanglement breaking, and hence it has a Choi-Kraus representation consisting of rank one operators. Zauner's weak conjecture can then be related to the problem of obtaining a \emph{minimal} Choi-Kraus representation of the quantum channel $\ff Z$ consisting of rank one operators. First we establish a weaker proposition.

\begin{prop}\label{Channel-to-Zauner}
Zauner's weak conjecture is true if and only if for each positive integer $d\geq 2$, the Choi-Kraus representation of $\ff Z$ consists of $d^2$ \emph{positive} rank one operators.
\end{prop}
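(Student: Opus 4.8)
The plan is to prove the two implications separately by explicitly matching rank-one Choi--Kraus operators to equiangular systems of vectors; throughout I work with a fixed dimension $d\geq 2$, since both sides of the stated equivalence are quantified over $d$.

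For the forward implication, suppose there exist $d^2$ unit vectors $\{v_i\}_{i=1}^{d^2}\subseteq \bb C^d$ with $|\inner{v_i}{v_j}|^2=\frac{1}{d+1}$ for $i\neq j$. Setting $P_i=v_iv_i^*$, identity \eqref{Z-formula} gives $\ff Z(X)=\frac{1}{d}\sum_i P_iXP_i$, so I would take $R_i=\frac{1}{\sqrt d}\,v_iv_i^*$. Each $R_i$ is positive and rank one. Since Proposition \ref{SIC-iff-equiangular} guarantees that such vectors generate a SIC POVM, in particular $\sum_i\frac{1}{d}v_iv_i^*=\bb I_d$, one computes directly that $\sum_i R_i^*R_i=\frac{1}{d}\sum_i v_iv_i^*=\bb I_d$ and $\sum_i R_iXR_i^*=\ff Z(X)$. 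Thus $\{R_i\}_{i=1}^{d^2}$ is the required representation of $\ff Z$ by $d^2$ positive rank-one operators.

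The converse is the substantive direction. Assume $\ff Z(X)=\sum_{i=1}^{d^2}R_iXR_i^*$ with each $R_i$ positive and rank one, and write $R_i=x_ix_i^*$ for nonzero $x_i\in\bb C^d$; set $y_i=x_ix_i^*$ and $t_i=\norm{x_i}^2>0$. The first step is a linear-independence observation: because $t=\frac{1}{d+1}$ lies in $\left(\frac{-1}{d^2-1},1\right)$, the preliminaries give $\mathrm{cr}(\ff Z)=d^2$, and the Choi matrix of any Choi--Kraus representation equals a sum of rank-one positive matrices whose range is the span of the vectorized operators; hence its rank is the dimension of that span. A representation achieving the minimal count $d^2$ therefore forces the vectorizations, and so the $y_i$ themselves, to be linearly independent in $\bb M_d$. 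Trace preservation reads $\sum_i t_i y_i=\bb I_d$. The key computation is to evaluate the channel at $y_j$: the Kraus form gives $\ff Z(y_j)=\sum_i y_i y_j y_i=\sum_i |\inner{x_i}{x_j}|^2 y_i$, while the defining formula gives $\ff Z(y_j)=\frac{1}{d+1}(y_j+t_j\bb I_d)$. Substituting $\bb I_d=\sum_i t_iy_i$ turns the latter into the single linear combination
\[
\ff Z(y_j)=\frac{1}{d+1}\sum_i(\delta_{ij}+t_it_j)\,y_i .
\]
Writing $w_i=x_i/\norm{x_i}$ and $s_{ij}=|\inner{w_i}{w_j}|^2$, so that $|\inner{x_i}{x_j}|^2=t_it_j\,s_{ij}$, linear independence of $\{y_i\}$ lets me equate coefficients to get $t_it_j\,s_{ij}=\frac{1}{d+1}(\delta_{ij}+t_it_j)$ for all $i,j$. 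For $i\neq j$ this collapses, after dividing by $t_it_j>0$, to $s_{ij}=\frac{1}{d+1}$, which is exactly the equiangularity of Conjecture \ref{conj-zauner} for the $d^2$ unit vectors $w_i$; the case $i=j$ additionally yields $t_j^2=\frac{1}{d}$, recovering the SIC normalization as a consistency check.

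I expect the main obstacle to be the linear-independence step, since it is precisely the identity $\mathrm{cr}(\ff Z)=d^2$ that upgrades a $d^2$-term representation into linearly independent operators, and it is this independence alone that licenses passing from the operator equation to the scalar relations $t_it_j\,s_{ij}=\frac{1}{d+1}(\delta_{ij}+t_it_j)$. Once that is in hand, the remainder is a direct manipulation of \eqref{Z-formula}, the trace-preservation identity $\sum_i t_iy_i=\bb I_d$, and Proposition \ref{SIC-iff-equiangular}.
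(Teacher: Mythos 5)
Your proposal is correct and takes essentially the same route as the paper: in the substantive converse direction you write the positive rank-one operators as $x_ix_i^*$, use trace preservation to express $\bb I_d$, compute $\ff Z$ on each $x_jx_j^*$ in two ways, and invoke the linear independence forced by $\mathrm{cr}(\ff Z)=d^2$ to equate coefficients and extract equiangularity, which is exactly the paper's argument. The only cosmetic difference is that in the forward direction you cite the known identity \eqref{Z-formula} as a black box, whereas the paper reproves it for completeness.
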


\begin{proof}
The forward implication is a known result \cite{Caves99}. However, we include a proof for the sake of completeness. 

Suppose that Zauner's weak conjecture is true. Then for each $d\geq 2$, there exist $d^2$ unit vectors $\{w_i\}_{i=1}^{d^2}$ in $\bb C^d$ such that $|\inner{w_i}{w_j}|^2=\frac{1}{d+1}$ for all $i\neq j$. By Proposition \ref{SIC-iff-equiangular}, the set $\{R_i\}_{i=1}^{d^2}\subseteq \bb M_d$, where $R_i=\frac{1}{d}w_iw_i^*$, forms a SIC POVM. Note that $R_i^2 = \frac{1}{d}R_i$ for every $1\leq i\leq d^2$. Define a map $\Phi:\bb M_d\to \bb M_d$ by 
\begin{align*}
\Phi(X) = d \sum_{i=1}^{d^2}R_iXR_i^*, \qquad X\in \bb M_d.
\end{align*}
Clearly, $\Phi$ is completely positive, and the following shows that it is unital and trace preserving: \begin{align*}
d\sum_{i=1}^{d^2}R_i^*R_i = d\sum_{i=1}^{d^2}R_i^2=d\sum_{i=1}^{d^2}\frac{1}{d}R_i = \sum_{i=1}^{d^2}R_i = \bb I_d.
\end{align*}
Thus, $\Phi$ is a unital quantum channel. Next we observe that \begin{align*}
\Phi(R_j) &= d\left(R_j^3+\sum_{\substack{1\leq i\leq d^2 \\ i\neq j}} R_iR_jR_i \right) = d\left(\frac{1}{d^2}R_j+\frac{1}{d^2(d+1)} \sum_{\substack{1\leq i\leq d^2 \\ i\neq j}} R_i \right) \\
&= d\left(\frac{1}{d^2}R_j+ \frac{1}{d^2(d+1)} (\bb I_d-R_j) \right) = \frac{1}{d+1}\left(R_j+\frac{1}{d} \bb I_d\right) = \ff Z(R_j).
\end{align*} Since the set $\{R_j\}_{j=1}^{d^2}$ is informationally complete, it spans $\bb M_d$, and therefore it follows that $\Phi=\ff Z$, which yields the desired Choi-Kraus representation of $\ff Z$ with $B_i=\sqrt dR_i$ for every $1\leq i\leq d^2$.

Conversely, suppose $\ff Z$ has a Choi-Kraus representation given by $\ff Z(X)=\sum_{i=1}^{d^2}B_iXB_i^*$, where each $B_i\in \bb M_d$ is a rank one positive operator. Then for every $1\leq i\leq d^2$, $B_i=v_iv_i^*$ for some $v_i\in \bb C^d$. Since the channel $\ff Z$ is unital, we have 
\begin{align*}
\bb I_d=\sum_{i=1}^{d^2}B_i^2 = \sum_{i=1}^{d^2}\|v_i\|^2B_i.
\end{align*} Using this, and using $\ff Z=\frac{1}{d+1}\cl I_d+\frac{d}{d+1}\Psi_d$, on one hand we get \begin{align*}
\ff Z(B_j) &= \frac{1}{d+1}(B_j+\|v_j\|^2\bb I_d) = \frac{1}{d+1}\left(B_j+\sum_{i=1}^{d^2}\|v_j\|^2\|v_i\|^2 B_i\right) \\
&= \frac{1}{d+1}\left((1+\|v_j\|^4)B_j+ \sum_{\substack{1\leq i\leq d^2 \\ i\neq j}}\|v_i\|^2\|v_j\|^2 B_i \right);
\end{align*} and on the other hand, using Choi-Kraus representation of $\ff Z$, we get, \begin{align*}
\ff Z(B_j) = \sum_{i=1}^{d^2}B_iB_jB_i = B_j^3 + \sum_{\substack{1\leq i\leq d^2 \\ i\neq j}} B_iB_jB_i = \|v_j\|^4B_j  + \sum_{\substack{1\leq i\leq d^2 \\ i\neq j}} |\inner{v_i}{v_j}|^2 B_i.
\end{align*} Comparing $\ff Z(B_j)$ obtained in two ways, we get \begin{align*}
\left(\frac{1+\|v_j\|^4}{d+1} - \|v_j\|^4 \right) B_j + \sum_{\substack{1\leq i\leq j\\i\neq j}} \left( 
\frac{\|v_i\|^2\|v_j\|^2}{d+1} - |\inner{v_i}{v_j}|^2\right) B_i = 0.
\end{align*} Since $\{B_i\}_{i=1}^{d^2}$ is a linearly independent set (because number of Choi-Kraus operators equals the Choi-rank), we must have \begin{align*}
\frac{1+\|v_j\|^4}{d+1} - \|v_j\|^4 = 0,\qquad \frac{\|v_i\|^2\|v_j\|^2}{d+1}-|\inner{v_i}{v_j}|^2 = 0, \forall \;i\neq j.
\end{align*} The first one yields, $\|v_j\|^4=\frac{1}{d}$, which is constant for all $1\leq j\leq d^2$, and using this the second one yields $|\inner{v_i}{v_j}|^2 = \frac{1}{d(d+1)}$, for all $i\neq j$. Then it is easy to see that the normalized vectors $w_i:=\frac{v_i}{\|v_i\|}$ satisfy $|\inner{w_i}{w_j}|^2=\frac{1}{d+1}$ for all $i\neq j$, so that Zauner's weak conjecture holds.
\end{proof}

The following result shows that the positivity condition on $B_i$ in the Proposition \ref{Channel-to-Zauner} is redundant and can be dropped. 

\begin{theorem}\label{Channel-to-Zauner-2}
Zauner's weak conjecture is true if and only if for each positive integer $d\geq 2$, the Choi-Kraus representation of $\ff Z$ consists of $d^2$ rank one operators.
\end{theorem}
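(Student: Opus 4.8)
The forward implication is immediate: if Zauner's weak conjecture holds then, by Proposition \ref{Channel-to-Zauner}, $\ff Z$ admits a Choi--Kraus representation with $d^2$ positive rank one operators, and these are in particular rank one. So the whole content lies in the converse, and my plan is to show that any rank one representation of $\ff Z$ is automatically ``essentially positive,'' thereby reducing to Proposition \ref{Channel-to-Zauner}. Concretely, suppose $\ff Z(X)=\sum_{i=1}^{d^2}B_iXB_i^*$ with each $B_i$ of rank one, and write $B_i=u_iv_i^*$ for nonzero vectors $u_i,v_i\in\bb C^d$. The goal is to prove that $u_i$ and $v_i$ are parallel for every $i$; once this is known, $B_i=c_iv_iv_i^*$ for a scalar $c_i$, and since $B_iXB_i^*=|c_i|^2(v_iv_i^*)X(v_iv_i^*)$ one may replace $B_i$ by the positive rank one operator $\tilde B_i=|c_i|\,v_iv_i^*$ without changing the channel. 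This yields a Choi--Kraus representation of $\ff Z$ by $d^2$ positive rank one operators, and Proposition \ref{Channel-to-Zauner} then delivers Zauner's weak conjecture.

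To force $u_i$ and $v_i$ to be parallel, the plan is to extract two scalar identities and play them against the Cauchy--Schwarz inequality. The first comes from trace preservation: since $\sum_i B_i^*B_i=\bb I_d$, taking traces gives $\sum_i \tr(B_i^*B_i)=d$, that is, $\sum_i\norm{u_i}^2\norm{v_i}^2=d$. The second identity is where the specific structure of $\ff Z$ enters. For any rank one representation $\Phi=\sum_iB_i\,\cdot\,B_i^*$ there is the elementary identity $\sum_{i}|\tr(B_i)|^2=\sum_{j,k}\tr\!\big(E_{k,j}\,\Phi(E_{j,k})\big)$, obtained by expanding $B_iE_{j,k}B_i^*$ and collecting diagonal entries. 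Evaluating the right-hand side for $\Phi=\ff Z$ via $\ff Z(E_{j,k})=\tfrac{1}{d+1}(E_{j,k}+\delta_{j,k}\bb I_d)$ gives $\tfrac{1}{d+1}(d^2+d)=d$. Hence $\sum_i|\tr(B_i)|^2=d$ as well.

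With both sums equal to $d$ in hand, the conclusion is immediate. Writing $\tr(B_i)=v_i^*u_i$, Cauchy--Schwarz gives $|\tr(B_i)|^2=|v_i^*u_i|^2\le\norm{u_i}^2\norm{v_i}^2=\tr(B_i^*B_i)$ for each $i$. Summing these termwise inequalities and invoking the two identities yields $d=\sum_i|\tr(B_i)|^2\le\sum_i\tr(B_i^*B_i)=d$, so equality must hold in every one of the Cauchy--Schwarz estimates. Equality forces $u_i$ and $v_i$ to be linearly dependent, i.e.\ parallel, which is exactly the reduction needed above.

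I expect the main obstacle to be psychological rather than technical: the temptation is to mimic the proof of Proposition \ref{Channel-to-Zauner} by computing $\ff Z(B_j)$ in two ways and comparing coefficients, but once $u_i\neq v_i$ the expansion $\ff Z(B_j)=\sum_iB_iB_jB_i^*$ produces the operators $u_iu_i^*$ rather than $B_i$, and the clean linear-independence argument of the positive case breaks down. The real insight is that one should not try to recover equiangularity directly; instead one first proves positivity (normality) of the Kraus operators, and the leverage for that is the coincidence of the two trace sums $\sum_i\norm{u_i}^2\norm{v_i}^2$ and $\sum_i|\tr(B_i)|^2$, both equal to $d$. Verifying the second sum is the one computational step that genuinely uses the particular form of $\ff Z$, and it is the crux of the argument.
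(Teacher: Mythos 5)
Your proof is correct, and it takes a genuinely different route from the paper's. The paper proves the converse by writing $B_i=x_iy_i^*$ (with $y_i$ unit vectors), using unitality to show that $\{x_ix_i^*\}_{i=1}^{d^2}$ is a basis of $\bb M_d$, comparing coefficients in two expansions of $\ff Z(x_jx_j^*)$ to obtain the pointwise equations $(d+1)|\inner{y_j}{x_j}|^2=1+\norm{x_j}^2$ and $(d+1)|\inner{y_i}{x_j}|^2=\norm{x_j}^2$ for $i\neq j$, and only then invoking Cauchy--Schwarz together with $\sum_j\norm{x_j}^2=d$ to force equality and hence positivity. You bypass the basis/linear-independence step entirely: your two aggregate identities --- $\sum_i\tr(B_i^*B_i)=d$ from trace preservation, and $\sum_i|\tr(B_i)|^2=\sum_{j,k}\tr\bigl(E_{k,j}\ff Z(E_{j,k})\bigr)=\frac{1}{d+1}(d^2+d)=d$ (I checked both the general identity, which is just the overlap of the Choi matrix with the maximally entangled vector, and its evaluation for $\ff Z$) --- pinch the summed Cauchy--Schwarz inequality, so equality holds term by term and each $u_i$ is parallel to $v_i$. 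Both arguments share the same endgame (equality in Cauchy--Schwarz, then reduction to Proposition \ref{Channel-to-Zauner}), but yours is more economical and proves something strictly stronger: \emph{every} rank-one Choi--Kraus representation of $\ff Z$, of any length, automatically consists of positive operators up to phase; the count $d^2$ enters only when Proposition \ref{Channel-to-Zauner} is applied. It also isolates cleanly why $t=\frac{1}{d+1}$ is special: for $\Phi_t$ your second sum equals $1+t(d^2-1)$, which agrees with the first sum $d$ exactly when $t=\frac{1}{d+1}$, consistent with the fact (e.g.\ for $\Psi_d$, whose Kraus operators may be taken to be the non-positive matrices $\frac{1}{\sqrt d}E_{j,k}$) that automatic positivity fails for other depolarizing channels. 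One small correction to your closing remark: the paper does carry out the ``tempting'' computation you describe; the appearance of $x_ix_i^*$ in place of $B_i$ is precisely what it repairs by proving $\{x_ix_i^*\}_{i=1}^{d^2}$ is a basis, so that route does not break down --- it is just heavier than yours.
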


\begin{proof}
It suffices to prove the backward implication. 
Suppose that $\ff Z$ has a Choi-Kraus representation given by $\ff Z(X)=\sum_{i=1}^{d^2}B_iXB_i^*$, where each $B_i\in \bb M_d$ is a rank one operator. Then for every $1\leq i\leq d^2$, $B_i=x_iy_i^*$ for some vectors $x_i, y_i\in \bb C^d$. Without loss of generality, we may assume that each $y_i$ is a unit vector. If we show that for every $1\leq i\leq d^2$, $x_i=\lambda_iy_i$ for some $\lambda_i>0$, then the Choi-Kraus representation of $\ff Z$ reduces to a form which consists of rank one positive operators. Then the result follows from Proposition \ref{Channel-to-Zauner}.

We first prove that the set $\{x_ix_i^*\}_{i=1}^{d^2}\subseteq \bb M_d$ is a basis for $\bb M_d$.
To do this we note that $\ff Z$ is unital and hence we have 
\begin{align}\label{eq:unital-phi}
\bb I_d = \ff Z(\bb I_d) = \sum_{i=1}^{d^2}B_iB_i^* = \sum_{i=1}^{d^2} x_iy_i^*y_ix_i^* = \sum_{i=1}^{d^2} x_ix_i^*.
\end{align} 
Using this equation, on one hand we get \begin{align*}
\ff Z(B_j) = \frac{1}{d+1}\left(B_j+\tr(B_j)\bb I_d \right) = \frac{1}{d+1}\left(B_j+\inner{x_j}{y_j}\left(\sum_{i=1}^{d^2}x_ix_i^* \right) \right),
\end{align*} 
and on the other hand we get 
\begin{align*}
\ff Z(B_j) = \sum_{i=1}^{d^2} B_iB_jB_i^* = \sum_{i=1}^{d^2} (x_iy_i^*)(x_jy_j^*)(x_iy_i^*)^* = \sum_{i=1}^{d^2} \inner{x_j}{y_i}\inner{y_i}{y_j} x_ix_i^*,
\end{align*}
for $1\leq j \leq d^2$.
Comparing $\ff Z(B_j)$ obtained in the above two ways, we get 
\begin{align*}
\frac{1}{d+1}\left(B_j+\inner{x_j}{y_j}\left(\sum_{i=1}^{d^2}x_ix_i^* \right) \right) =  \sum_{i=1}^{d^2} \inner{x_j}{y_i}\inner{y_i}{y_j} x_ix_i^*,
\end{align*} which implies \begin{align*}
B_j = \sum_{i=1}^{d^2} \left((d+1)\inner{x_j}{y_i}\inner{y_i}{y_j}-\inner{x_j}{y_j} \right) x_ix_i^*.
\end{align*}
This shows that the set $\{x_ix_i^*\}_{i=1}^{d^2}$ spans $\{B_j\}_{j=1}^{d^2}$. But since the Choi-rank of $\ff Z$ is $d^2$, $\{B_j\}_{j=1}^{d^2}$ is a basis for $\bb M_d$. Consequently, the set $\{x_ix_i^*\}_{i=1}^{d^2}$ is a minimal spanning set, and hence a basis for $\bb M_d$.

We now show that for each $j$, $x_j$ is a scalar multiple of $y_j$. Using Equation \eqref{eq:unital-phi}, for each $1\leq j \leq d^2$, on one hand we get
\begin{align*}
\ff Z(x_jx_j^*) &= \frac{1}{d+1}\left(x_jx_j^*+\tr(x_jx_j^*)\bb I_d \right) = \frac{1}{d+1}\left(x_jx_j^* + \|x_j\|^2\left(\sum_{i=1}^{d^2}x_ix_i^* \right) \right),
\end{align*} and on the other hand, we get \begin{align*}
\ff Z(x_jx_j^*) &= \sum_{i=1}^{d^2} B_i(x_jx_j^*)B_i^* = \sum_{i=1}^{d^2}x_iy_i^*x_jx_j^*y_ix_i^* = \sum_{i=1}^{d^2} |\inner{y_i}{x_j}|^2 x_ix_i^*.
\end{align*} Comparing the two expressions of $\ff Z(x_jx_j^*)$, we get \begin{align*}
\frac{1}{d+1}\left(x_jx_j^* + \|x_j\|^2\left(\sum_{i=1}^{d^2}x_ix_i^* \right) \right) &= \sum_{i=1}^{d^2} |\inner{y_i}{x_j}|^2 x_ix_i^*.
\end{align*} Because of linear independence of the set $\{x_ix_i^*\}_{i=1}^{d^2}$, comparing the coefficients of $x_ix_i^*$, it follows that \begin{align}
(d+1)|\inner{y_j}{x_j}|^2 &= 1+\|x_j\|^2, \qquad \forall j, \label{eq:zero-coeff-1}\\
(d+1)|\inner{y_i}{x_j}|^2 &= \|x_j\|^2, \qquad \forall i\neq j.\label{eq:zero-coeff-2}
\end{align} 
Using the Cauchy-Schwarz inequality in Equation \eqref{eq:zero-coeff-1}, we obtain
\begin{align*}
1+\|x_j\|^2 = (d+1)|\inner{y_j}{x_j}|^2 \leq (d+1)\|y_j\|^2\|x_j\|^2 = (d+1)\|x_j\|^2,
\end{align*} which implies $\frac{1}{d}\leq \|x_j\|^2$, and so $\sum_{j=1}^{d^2}\|x_j\|^2\geq d$. Moreover, taking the trace of Equation \eqref{eq:unital-phi}, we get $\sum_{i=1}^{d^2}\|x_i\|^2=d$. This implies that equality holds in the Cauchy-Schwarz inequality above, and therefore, $x_j=\lambda_j y_j$, for some scalar $\lambda_j$, $1\leq j\leq d^2$. 

Since there is no loss of generality in assuming that $\lambda_j>0$, it follows that each $B_j$ is indeed a positive rank one operator, which proves the theorem. \end{proof}

The following corollary is an immediate consequence of Theorem \ref{Channel-to-Zauner-2} along with the fact that $\mathrm{cr}(\ff Z)=d^2$.

\begin{cor}\label{ebr-Zauner}
Zauner's weak conjecture holds if and only if $\ebr(\ff Z)=d^2$ for all $d\geq 2$.
\end{cor}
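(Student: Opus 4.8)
The plan is to derive the corollary directly from Theorem \ref{Channel-to-Zauner-2} by translating its statement about the \emph{existence} of a $d^2$-term rank-one Choi-Kraus representation into a statement about the numerical invariant $\ebr(\ff Z)$. The bridge between the two is the chain of inequalities \eqref{eq:choi-ebr-rank}, namely $d\leq \mathrm{cr}(\ff Z)\leq \ebr(\ff Z)$, together with the already-established fact that $\mathrm{cr}(\ff Z)=d^2$. The latter is simply the $t=\frac{1}{d+1}$ instance of the Choi-rank computation at the end of Section~2, valid because $\frac{1}{d+1}\in \left(\frac{-1}{d^2-1},1\right)$. I would fix a dimension $d\geq 2$ and argue the two implications separately.

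For the forward direction, suppose Zauner's weak conjecture holds in dimension $d$. By the forward implication of Theorem \ref{Channel-to-Zauner-2} (equivalently, Proposition \ref{Channel-to-Zauner}), the channel $\ff Z$ admits a Choi-Kraus representation consisting of exactly $d^2$ rank-one operators, whence $\ebr(\ff Z)\leq d^2$ by the very definition of $\ebr$ as a minimum. On the other hand, \eqref{eq:choi-ebr-rank} gives $\ebr(\ff Z)\geq \mathrm{cr}(\ff Z)=d^2$. Combining the two inequalities yields $\ebr(\ff Z)=d^2$.

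For the converse, suppose $\ebr(\ff Z)=d^2$. By the definition of the entanglement breaking rank, the minimum number of rank-one Kraus operators in a Choi-Kraus representation of $\ff Z$ equals $d^2$; in particular, there exists such a representation with exactly $d^2$ rank-one operators. This is precisely the hypothesis of the backward implication of Theorem \ref{Channel-to-Zauner-2}, which then delivers Zauner's weak conjecture in dimension $d$. Note that since $\mathrm{cr}(\ff Z)=d^2$, any $d^2$-term Choi-Kraus representation automatically consists of linearly independent Kraus operators, which is exactly the property that the proof of Theorem \ref{Channel-to-Zauner-2} exploits, so the hypotheses match cleanly.

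I do not anticipate a genuine obstacle, since the substantive work resides in Theorem \ref{Channel-to-Zauner-2} and the corollary is essentially bookkeeping. The only point requiring a moment's care is the lower bound: one must not forget to invoke $\ebr(\ff Z)\geq \mathrm{cr}(\ff Z)=d^2$ in order to upgrade $\ebr(\ff Z)\leq d^2$ to an equality in the forward direction. Finally, because both implications are carried out for an arbitrary fixed $d\geq 2$, quantifying over all $d\geq 2$ produces the stated equivalence between Zauner's weak conjecture and the assertion that $\ebr(\ff Z)=d^2$ for all $d\geq 2$.
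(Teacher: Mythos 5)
Your proof is correct and follows exactly the paper's own route: the paper presents the corollary as an immediate consequence of Theorem \ref{Channel-to-Zauner-2} together with the fact that $\mathrm{cr}(\ff Z)=d^2$ and the inequality $\mathrm{cr}(\ff Z)\leq \ebr(\ff Z)$ from \eqref{eq:choi-ebr-rank}. Your write-up simply makes explicit the bookkeeping (both directions, plus the linear-independence remark) that the paper leaves to the reader.
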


Interestingly, for $t< \frac{1}{d+1}$, the channel $\Phi_t = t\cl I_d+(1-t)\Psi_d$ cannot have a Choi-Kraus representation with $d^2$ \emph{positive} rank one Choi-Kraus operators, which we prove next.

\begin{prop}
Fix an integer $d \ge 2$ and let $t\in \left(\frac{-1}{d^2-1},\frac{1}{d+1} \right)$. Let $\Phi_t:\bb M_d\to \bb M_d$ be the quantum channel given by $\Phi_t = t\cl I_d+(1-t)\Psi_d$. Then $\Phi_t$ cannot have a Choi-Kraus representation, $\Phi_t(X)=\sum_{i=1}^{d^2}R_iXR_i^*$, where each $R_i$ is a positive rank one operator.
\end{prop}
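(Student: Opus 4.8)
The plan is to run the argument in the converse direction of Proposition~\ref{Channel-to-Zauner} essentially verbatim, but to carry the parameter $t$ through the computation and show that self-consistency of the resulting equations forces $t=\frac{1}{d+1}$, which is excluded. So I would argue by contradiction: suppose $\Phi_t(X)=\sum_{i=1}^{d^2}R_iXR_i^*$ with each $R_i$ a positive rank one operator, say $R_i=v_iv_i^*$ for nonzero $v_i\in\bb C^d$. Since the excerpt has already shown that $\mathrm{cr}(\Phi_t)=d^2$ whenever $t\in\left(\frac{-1}{d^2-1},1\right)$, and our interval is contained in this one, the $d^2$ operators $\{v_iv_i^*\}_{i=1}^{d^2}$ are linearly independent and hence form a basis of $\bb M_d$. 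Using $R_i^*=R_i$ and $R_i^2=\norm{v_i}^2R_i$ together with the fact that $\Phi_t$ is unital, the trace-preservation/unitality identity becomes $\bb I_d=\sum_{i=1}^{d^2}\norm{v_i}^2v_iv_i^*$.

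Next I would evaluate $\Phi_t(R_j)$ in two ways. On one side, $\Phi_t(X)=tX+\frac{1-t}{d}\tr(X)\bb I_d$ together with the identity above gives
\[
	\Phi_t(R_j)=t\,v_jv_j^*+\tfrac{1-t}{d}\norm{v_j}^2\textstyle\sum_{i=1}^{d^2}\norm{v_i}^2v_iv_i^*,
\]
while the Choi--Kraus expression gives $\Phi_t(R_j)=\sum_{i=1}^{d^2}\abs{\inner{v_i}{v_j}}^2v_iv_i^*$. Equating these and comparing coefficients against the basis $\{v_iv_i^*\}$, the diagonal ($i=j$) term yields $\norm{v_j}^4=\frac{td}{d+t-1}$, a constant independent of $j$, and each off-diagonal term yields $\abs{\inner{v_i}{v_j}}^2=\frac{1-t}{d}\norm{v_i}^2\norm{v_j}^2$.

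The contradiction then splits into two regimes, and this two-regime structure is the one point I expect to require care, since the natural equiangularity obstruction is vacuous on part of the interval. If $t\le 0$, then $\frac{td}{d+t-1}\le 0$ (note $d+t-1>0$ throughout the interval for $d\ge 2$), which is impossible since $\norm{v_j}^4>0$; this already disposes of $t\in\left(\frac{-1}{d^2-1},0\right]$. If instead $t>0$, then all the $v_j$ share a common positive norm, so the normalized vectors $w_i:=v_i/\norm{v_i}$ satisfy $\abs{\inner{w_i}{w_j}}^2=\frac{1-t}{d}$ for all $i\neq j$; since $t<\frac{1}{d+1}<1$ this common value lies strictly between $0$ and $1$. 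Proposition~\ref{prop:constant-angle}, applied to these $d^2$ unit vectors, then forces $\frac{1-t}{d}=\frac{1}{d+1}$, i.e. $t=\frac{1}{d+1}$, contradicting $t<\frac{1}{d+1}$. Thus no positive rank one Choi--Kraus representation can exist, completing the proof.

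The calculation itself is routine given the earlier results; the genuine work is recognizing that the positivity assumption collapses the representation to $R_i=v_iv_i^*$ so that the same coefficient-matching as in Proposition~\ref{Channel-to-Zauner} applies, and then noticing that one must invoke the nonnegativity of $\norm{v_j}^4$ to kill $t\le 0$ \emph{before} the constant-angle proposition can be used to kill $t\in\left(0,\frac{1}{d+1}\right)$.
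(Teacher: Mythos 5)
Your proposal is correct and follows essentially the same route as the paper's proof: assume $R_i=v_iv_i^*$, use unitality to write $\bb I_d=\sum_i\norm{v_i}^2v_iv_i^*$, evaluate $\Phi_t(R_j)$ two ways, compare coefficients against the linearly independent set $\{v_iv_i^*\}$ to get $\norm{v_j}^4=\frac{dt}{d+t-1}$ and $\abs{\inner{v_i}{v_j}}^2=\frac{1-t}{d}\norm{v_i}^2\norm{v_j}^2$, and invoke Proposition~\ref{prop:constant-angle} to force $t=\frac{1}{d+1}$. Your explicit case split disposing of $t\le 0$ via positivity of $\norm{v_j}^4$ before normalizing is a small extra point of rigor that the paper's presentation glides over (it divides by $\norm{v_i}^2\norm{v_j}^2=\frac{dt}{d+t-1}$ without comment), but it is a refinement of, not a departure from, the same argument.
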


\begin{proof}
We shall follow the arguments as in Proposition \ref{Channel-to-Zauner}. Suppose $\Phi_t$ has a Choi-Kraus representation, $\Phi_t(X)=\sum_{i=1}^{d^2}R_iXR_i^*$, where $R_i=v_iv_i^*$ for some vector $v_i\in \bb C^d$. Since $\Phi_t$ is unital, we have $\bb I_d=\sum_{i=1}^{d^2}\|v_i\|^2R_i$. Applying $\Phi_t$ to $R_j$, \begin{align*}
\Phi_t(R_j) = tR_j + \frac{1-t}{d}\tr(R_j)\bb I_d = \left(t+\frac{1-t}{d}\|v_j\|^4 \right)R_j + \sum_{\substack{1\leq i\leq d^2 \\ i\neq j}}\frac{1-t}{d}\|v_j\|^2\|v_i\|^2R_i,
\end{align*} and also \begin{align*}
\Phi_t(R_j) = \sum_{i=1}^{d^2}R_iR_jR_i^* =  \|v_j\|^4R_j  +\sum_{\substack{1\leq i\leq d^2\\i\neq j}}|\inner{v_i}{v_j}|^2R_i.
\end{align*} Comparing both the expressions of $\Phi_t(R_j)$ and using the linear independence of $R_i$'s, we get \begin{align*}
t+\frac{1-t}{d} \|v_j\|^4 = \|v_j\|^4, \qquad \frac{1-t}{d}\|v_j\|^2\|v_i\|^2 = |\inner{v_i}{v_j}|^2, \quad\forall i\neq j.
\end{align*} This implies \begin{align*}
\|v_j\|^4 = \frac{dt}{d+t-1}, \qquad |\inner{v_i}{v_j}|^2 = \frac{t(1-t)}{d+t-1}, \quad\forall i\neq j.
\end{align*} Let $w_i=\frac{v_i}{\|v_i\|}$. Then for all $i\neq j$, we have \begin{align*}
|\inner{w_i}{w_j}|^2 = \frac{1}{\|v_i\|^2\|v_j\|^2}|\inner{v_i}{v_j}|^2 = \frac{d+t-1}{dt}\frac{t(1-t)}{d+t-1} = \frac{1-t}{d}.
\end{align*} But by Proposition \ref{prop:constant-angle}, we must have $\frac{1-t}{d}=\frac{1}{1+d}$, which implies that $t=\frac{1}{d+1}$, which is a contradiction.
\end{proof}

In what follows, we seek other entanglement breaking channels with the property that a statement about their entanglement breaking rank is equivalent to Zauner's weak conjecture.

Let $\Phi:\bb M_d\to \bb M_d$ be an entanglement breaking map, $T:\bb M_d\to \bb M_d$ be the transpose map, and for a given unitary $U\in \bb M_d$ let $\Ad_U:\bb M_d\to \bb M_d$ be the map $\Ad_U(X)=UXU^*$. Since $T$ and $\Ad_U$ are both positive, it follows that $T\circ \Phi$ and $\Ad_U\circ \Phi$ are entanglement breaking. The following result determines the entanglement breaking rank of these two channels.

\begin{prop}\label{ebrtransposeinvariant}
Let $\Phi:\bb M_d\to \bb M_d$ be an entanglement breaking map. Then $\ebr(T\circ \Phi)=\ebr(\Phi)=\ebr(\Ad_U\circ \Phi)$.
\end{prop}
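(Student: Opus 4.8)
The plan is to show that each of the post-composition operations $\Phi\mapsto\Ad_U\circ\Phi$ and $\Phi\mapsto T\circ\Phi$ sets up a length-preserving correspondence between rank-one Choi-Kraus representations of $\Phi$ and those of the composed channel. Since $\Ad_U$ is invertible (with inverse $\Ad_{U^*}$) and $T$ is an involution, each correspondence is reversible, so the minimal number of rank-one operators---the $\ebr$---is unchanged. Concretely, I would establish the inequalities $\ebr(\Ad_U\circ\Phi)\le\ebr(\Phi)$ and $\ebr(T\circ\Phi)\le\ebr(\Phi)$, and then obtain the reverse inequalities by applying the same bounds to the inverse operation. Throughout I note that $\Ad_U\circ\Phi$ and $T\circ\Phi$ are themselves entanglement breaking (by positivity of $\Ad_U$ and $T$ together with characterization (d) of Theorem \ref{EB-equivalences}), so that their $\ebr$ is defined.

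For the unitary conjugation, suppose $\Phi(X)=\sum_{k=1}^K R_kXR_k^*$ is a rank-one Choi-Kraus representation with $K=\ebr(\Phi)$. Then
\[
\Ad_U\circ\Phi(X)=U\Big(\sum_k R_kXR_k^*\Big)U^*=\sum_k (UR_k)X(UR_k)^*.
\]
Since $U$ is invertible, $UR_k$ has the same rank as $R_k$, namely one, so this is a valid rank-one representation of $\Ad_U\circ\Phi$ with $K$ terms; hence $\ebr(\Ad_U\circ\Phi)\le\ebr(\Phi)$.

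For the transpose, the key point is to rewrite the transposed sum back into the completely positive form $\sum_k S_kXS_k^*$. Writing each rank-one operator as an outer product $R_k=x_ky_k^*$, one computes $R_kXR_k^*=(y_k^*Xy_k)\,x_kx_k^*$, so that
\[
T\circ\Phi(X)=\sum_k (y_k^*Xy_k)\,(x_kx_k^*)^T=\sum_k (y_k^*Xy_k)\,\overline{x_k}\,x_k^T.
\]
Setting $S_k=\overline{x_k}\,y_k^*$ and using $x_k^T=(\overline{x_k})^*$, one gets $S_kXS_k^*=(y_k^*Xy_k)\,\overline{x_k}\,x_k^T$, whence $T\circ\Phi(X)=\sum_k S_kXS_k^*$ with each $S_k$ manifestly rank one (as $x_k,y_k\neq 0$). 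This gives $\ebr(T\circ\Phi)\le\ebr(\Phi)$.

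Finally, the reverse inequalities follow by applying these bounds to the inverse operations: $\Phi=\Ad_{U^*}\circ(\Ad_U\circ\Phi)$ yields $\ebr(\Phi)\le\ebr(\Ad_U\circ\Phi)$, and $\Phi=T\circ(T\circ\Phi)$ yields $\ebr(\Phi)\le\ebr(T\circ\Phi)$, completing the chain of equalities. The only step requiring genuine care is the transpose computation: one must track the complex conjugations correctly so that the transposed sum is expressed as a bona fide CP sum $\sum_k S_kXS_k^*$ rather than an expression involving $X^T$, and this is precisely what the outer-product bookkeeping $S_k=\overline{x_k}\,y_k^*$ accomplishes. The unitary case and both reverse inequalities are then essentially formal.
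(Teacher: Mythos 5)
Your proof is correct and follows essentially the same route as the paper: apply $U$ (respectively the outer-product conjugation $x_ky_k^*\mapsto \overline{x_k}\,y_k^*$) to the rank-one Kraus operators, preserving both rank and the number of terms. Your explicit treatment of the reverse inequalities via $\Ad_{U^*}$ and $T\circ T=\mathrm{id}$ merely spells out what the paper leaves as ``immediate.''
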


\begin{proof}
 Note that for any unitary $U$ and a matrix $R$, $UR$ has rank $1$ if and only if $R$ has rank $1$. Now this observation applied to the rank-$1$ Kraus operators of $\Phi$ immediately shows that $\ebr(\Phi)=\ebr(\Ad_U\circ \Phi)$.

Similarly, since $(v_jv_j^*)^T=\bar{v_j}v_j^T$, after composing the channel with the transpose map, the Kraus operators get transformed from $v_iw_j^*$ into $\bar{v_j}w_j^*$, which are still rank one. This shows that $\Phi$ is entanglement breaking if and only if $T\circ\Phi$ is. Now it is immediate that $\ebr(T\circ \Phi)=\ebr(\Phi).$ 
\end{proof}

\begin{cor}\label{equival-zauner} 
Let $d\geq 2$ and $\ff Z=\frac{1}{d+1}\cl I_d+\frac{d}{d+1}\Psi_d$. Then the following are equivalent. \begin{enumerate}
\item[(a)] Zauner's weak conjecture is true in dimension $d$.
\item[(b)] $\ebr(\ff Z) = d^2$.
\item[(c)] $\ebr(T\circ \ff Z) = d^2$, where $T$ is the transpose map.
\item[(d)] $\ebr(\Ad_U\circ \ff Z) = d^2$, for any unitary $U\in \bb M_d$.
\end{enumerate}
\end{cor}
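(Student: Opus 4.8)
The plan is to assemble the corollary directly from the two preceding results, since no genuinely new computation is required. The backbone is the equivalence (a) $\Leftrightarrow$ (b), which is essentially the content of Corollary \ref{ebr-Zauner}. The one point I would check is that the argument is genuinely dimension-by-dimension, so that it applies to a single fixed $d$ rather than only to the "for all $d\geq 2$" version stated there. Inspecting the chain Proposition \ref{Channel-to-Zauner} $\to$ Theorem \ref{Channel-to-Zauner-2}, every step (the unital and trace-preserving identities, the comparison of the two expressions for $\ff Z(B_j)$, and the Cauchy--Schwarz argument) takes place inside $\bb M_d$ for one fixed $d$; hence the equivalence between the existence of $d^2$ equiangular unit vectors in $\bb C^d$ and the statement $\ebr(\ff Z)=d^2$ holds in each dimension separately. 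This secures (a) $\Leftrightarrow$ (b).

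For the remaining equivalences I would first record that $\ff Z$ is entanglement breaking: it equals $\Phi_t$ at $t=\frac{1}{d+1}$, which lies in $\left[\frac{-1}{d^2-1},\frac{1}{d+1}\right]$, so Theorem \ref{thm:conv-comb}(c) applies. With $\ff Z$ entanglement breaking, Proposition \ref{ebrtransposeinvariant} yields immediately
\[
	\ebr(T\circ \ff Z)=\ebr(\ff Z)=\ebr(\Ad_U\circ \ff Z)
\]
for every unitary $U\in \bb M_d$. Consequently $\ebr(\ff Z)=d^2$ if and only if $\ebr(T\circ \ff Z)=d^2$, giving (b) $\Leftrightarrow$ (c), and likewise $\ebr(\ff Z)=d^2$ if and only if $\ebr(\Ad_U\circ \ff Z)=d^2$, giving (b) $\Leftrightarrow$ (d).

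Stringing these together, (a) $\Leftrightarrow$ (b) $\Leftrightarrow$ (c) together with (b) $\Leftrightarrow$ (d) closes the full cycle of equivalences. I expect essentially no obstacle: this is a packaging corollary. The only point requiring any care is the restriction of Corollary \ref{ebr-Zauner} to a fixed dimension noted above, together with the observation that "$=d^2$" is the natural target in (b)--(d), since $\mathrm{cr}(\ff Z)=d^2$ already forces $\ebr\geq d^2$ via \eqref{eq:choi-ebr-rank}, so each equality is asserting the minimality of a rank-one Choi--Kraus representation of the corresponding channel.
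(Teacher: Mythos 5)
Your proposal is correct and matches the paper's (implicit) argument exactly: the paper states this corollary without proof, as the immediate assembly of Corollary~\ref{ebr-Zauner} for (a)$\Leftrightarrow$(b) with Proposition~\ref{ebrtransposeinvariant} for (b)$\Leftrightarrow$(c) and (b)$\Leftrightarrow$(d). Your extra check that the proofs of Proposition~\ref{Channel-to-Zauner} and Theorem~\ref{Channel-to-Zauner-2} work dimension-by-dimension (so the fixed-$d$ statement follows, not just the ``for all $d\geq 2$'' version of Corollary~\ref{ebr-Zauner}) is a legitimate point of care that the paper glosses over.
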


For the next result we let $P_{d}: \bb C^d \otimes \bb C^d \to \bb C^d \otimes \bb C^d$ denote the projection onto the symmetric subspace, so that
\[ P_{d} ( e_k \otimes e_l) = \frac{1}{2}( e_k \otimes e_l + e_l \otimes e_k).\]

\begin{cor}\label{symproj-equival-zauner} Let $d \ge 2$. Then Zauner's weak conjecture is true in dimension $d$ if and only if there exist $2d^2$ vectors, $x_1,...,x_{d^2}, y_1, ..., y_{d^2}$ in $\bb C^d$ such that
\[ P_d = \sum_{j=1}^{d^2} (x_j \otimes y_j)(x_j \otimes y_j)^*.\]
Moreover, if $v_1,...,v_{d^2}$ is a set of $d^2$ unit vectors in $\bb C^d$ satisfying $| \langle v_i, v_j \rangle |^2 = \frac{1}{d+1}$ for all $i \ne j$, then
\[ P_d = \frac{d+1}{2d} \sum_{j=1}^{d^2} (v_j \otimes v_j) ( v_j \otimes v_j)^*.\]
\end{cor}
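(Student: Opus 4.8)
The plan is to route everything through the Choi matrix of the transpose-composed channel $T\circ\ff Z$, together with the standard correspondence between rank-one Kraus operators and product vectors. The first and decisive step is to identify $P_d$ with a scalar multiple of $C_{T\circ\ff Z}$, which is exactly why the symmetric projection enters. Writing $F$ for the swap operator on $\bb C^d\otimes\bb C^d$, one has $P_d=\frac12(\bb I_{d^2}+F)$, and since $(T\circ\ff Z)(X)=\frac1{d+1}(X^T+\tr(X)\bb I_d)$, a direct evaluation of $C_{T\circ\ff Z}=\sum_{k,l}E_{k,l}\otimes(T\circ\ff Z)(E_{k,l})$ gives $C_{T\circ\ff Z}=\frac1{d+1}(F+\bb I_{d^2})=\frac2{d+1}P_d$. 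This single identity is the crux; everything else is bookkeeping around it.

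For the backward implication of the first assertion, suppose $P_d=\sum_{j=1}^{d^2}(x_j\otimes y_j)(x_j\otimes y_j)^*$. Then $C_{T\circ\ff Z}=\frac2{d+1}P_d=\sum_{j=1}^{d^2}\eta_j\eta_j^*$ with each $\eta_j=\sqrt{2/(d+1)}\,(x_j\otimes y_j)$ a product vector. By the standard fact that any positive semidefinite decomposition of a Choi matrix into rank-one terms is a Choi-Kraus representation, and that such a term is a product vector precisely when the associated Kraus operator has rank one, $T\circ\ff Z$ acquires a Choi-Kraus representation by $d^2$ rank-one operators. Composing once more with $T$ (which preserves rank-one-ness and the number of Kraus operators, as in Proposition \ref{ebrtransposeinvariant}) gives such a representation for $\ff Z$ itself, whence Theorem \ref{Channel-to-Zauner-2} yields Zauner's weak conjecture in dimension $d$.

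For the forward implication together with the ``moreover'' identity, I would start from a Zauner configuration $v_1,\dots,v_{d^2}$ and feed its conjugate $\{\bar v_j\}$ (still a Zauner configuration, since conjugation preserves $|\inner{v_i}{v_j}|^2$) into the forward direction of Proposition \ref{Channel-to-Zauner}, which supplies rank-one Kraus operators $B_j=\frac1{\sqrt d}\bar v_j\bar v_j^*$ for $\ff Z$. Vectorizing (with the convention $R=xy^*\mapsto\bar y\otimes x$) gives $C_{\ff Z}=\frac1d\sum_j(v_j\otimes\bar v_j)(v_j\otimes\bar v_j)^*$; applying the partial transpose on the second tensor factor—which sends $C_{\ff Z}$ to $C_{T\circ\ff Z}$ and turns each second-factor projection $\bar v_j\bar v_j^*$ into $v_jv_j^*$—yields $C_{T\circ\ff Z}=\frac1d\sum_j(v_j\otimes v_j)(v_j\otimes v_j)^*$. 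Comparing with $C_{T\circ\ff Z}=\frac2{d+1}P_d$ gives exactly $P_d=\frac{d+1}{2d}\sum_j(v_j\otimes v_j)(v_j\otimes v_j)^*$, which is the ``moreover'' identity; taking $x_j=y_j=\sqrt{(d+1)/(2d)}\,v_j$ then furnishes the $2d^2$ vectors required by the first assertion.

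The main obstacle is notational rather than conceptual: keeping the vectorization convention, the complex conjugations, and the partial transpose mutually consistent so that the product vectors emerge as $v_j\otimes v_j$ (and not $\bar v_j\otimes v_j$ or $\bar v_j\otimes\bar v_j$), with the correct scalar $\frac{d+1}{2d}$. I would pin down the convention $R=xy^*\mapsto\bar y\otimes x$ once, by verifying that the Choi matrix of $X\mapsto (xy^*)X(xy^*)^*$ equals $(\bar y\otimes x)(\bar y\otimes x)^*$, and track the elementary identity $(vv^*)^T=\bar v\bar v^*$ carefully; feeding the conjugate configuration $\{\bar v_j\}$ into Proposition \ref{Channel-to-Zauner} at the outset is precisely the device that absorbs the residual conjugation and lands the computation on the stated form.
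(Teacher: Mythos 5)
Your proof is correct and takes essentially the same route as the paper: both hinge on the identity $C_{T\circ \ff Z} = \frac{2}{d+1}P_d$ together with the standard correspondence between rank-one Choi--Kraus operators and product-vector decompositions of the Choi matrix, and both derive the ``moreover'' identity from the SIC decomposition of $\ff Z$, using that $\{v_j\}$ is equiangular if and only if $\{\overline{v_j}\}$ is. The only cosmetic difference is that you absorb the complex conjugation at the outset (by feeding $\{\overline{v_j}\}$ into Proposition~\ref{Channel-to-Zauner}), whereas the paper performs the computation with $\{v_j\}$ and conjugates at the very end.
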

\begin{proof} Note that the Choi matrix $C$ of $T \circ \ff Z$ is $\frac{2}{d+1} P_d$ and that writing
\[ T \circ \ff Z(X) = \sum_{j=1}^{d^2} (x_jy_j^*)X(y_j x_j^*),\]
is equivalent to writing
\[ C= \sum_{j=1}^{d^2} (x_j \otimes \overline{y_j})(x_j \otimes \overline{y_j})^*,\]
from which the equivalence follows, since the scaling is irrelevant.

Next, given such a set of vectors, by \ref{Z-formula} we have that
\[\ff Z(X)= \frac{1}{d} \sum_{j=1}^{d^2} (v_j^*Xv_j) v_jv_j^*,\]
so that
\[ T \circ \ff Z(X) = \frac{1}{d} \sum_{j=1}^{d^2} (v_j^*Xv_j) \overline{v_j} \overline{v_j}^*= \frac{1}{d} \sum_{j=1}^{d^2} (\overline{v_j} v_j^*)X(\overline{v_j} v_j^*)^*.\]
From this it follows that
\[ P_d = \frac{d+1}{2} C = \frac{d+1}{2d} \sum_{j=1}^{d^2} (\overline{v_j} \otimes \overline{v_j})(\overline{v_j} \otimes \overline{v_j})^*\]
and the result follows by observing that the vectors $v_j$ are equiangular if and only if the vectors $\overline{v_j}$ are equiangular. \end{proof}

\begin{remark}
The channel $T\circ \ff Z$ serves as an example of a quantum channel whose entanglement breaking rank is strictly greater than its Choi-rank. To see this, first note that by Proposition~\ref{ebrtransposeinvariant}, we have \begin{align*}
\ebr(T\circ \ff Z)= \ebr(\ff Z) \ge \rank(C_{\ff Z})=d^2.
\end{align*} The Choi-rank of $T\circ \ff Z$ is $\frac{d(d+1)}{2}$, since that is the dimension of the symmetric subspace. Hence using the above inequality we get for any $d\geq 2$, \begin{align*}
\ebr(T\circ \ff Z)\geq d^2>\frac{d(d+1)}{2}=\mathrm{cr}(T\circ \ff Z).
\end{align*} This observation was first made in Ref. \cite{DTT}. \end{remark}

\section{Lower semicontinuity of entanglement breaking rank}
The following result shows that ebr is a lower semicontinuous function. Recall that if $X$ is a metric space, $x_0\in X$, and $f:X\to \bb R\cup \{\pm \infty\}$, then we say that $f$ is \emph{lower semicontinuous} at $x_0$ if and only if whenever $(x_n)_{n=1}^{\infty}$ is a sequence in $X$ which converges to $x_0$, we have $\liminf_{n\to \infty}f(x_n)\geq f(x_0)$.

\begin{prop}\label{prop:ebr-lower-semicts}
Let $(\Psi_n)_{n=1}^{\infty}$ be a sequence of entanglement breaking maps where $\Psi_n:\bb M_d\to \bb M_m$, and suppose that $\Psi_n \to \Psi$. Then \begin{align*}
\ebr(\Psi) \leq \liminf_n \ebr(\Psi_n).
\end{align*}
\end{prop}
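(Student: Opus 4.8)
The plan is to establish lower semicontinuity by a compactness argument on the rank-one Kraus operators. Set $L = \liminf_n \ebr(\Psi_n)$. If $L = \infty$ there is nothing to prove, so I would assume $L < \infty$. Since $\ebr$ takes values in the positive integers (bounded below by $d$ via \eqref{eq:choi-ebr-rank}), the definition of $\liminf$ furnishes a subsequence $(\Psi_{n_k})$ along which $\ebr(\Psi_{n_k}) = L$ for all sufficiently large $k$; after discarding finitely many terms I may assume $\ebr(\Psi_{n_k}) = L$ for every $k$. Thus each $\Psi_{n_k}$ admits a Choi--Kraus representation
\[\Psi_{n_k}(X) = \sum_{i=1}^{L} R_i^{(k)} X (R_i^{(k)})^*, \qquad X \in \bb M_d,\]
with each $R_i^{(k)} \in \bb M_{m,d}$ of rank one.

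The key step is to pass to the limit in the Kraus operators, and for this I would exploit the trace-preservation constraint $\sum_{i=1}^L (R_i^{(k)})^* R_i^{(k)} = \bb I_d$. Taking the trace gives $\sum_{i=1}^L \tr\big((R_i^{(k)})^* R_i^{(k)}\big) = d$ uniformly in $k$, so the tuple $\big(R_1^{(k)}, \dots, R_L^{(k)}\big)$ lies in a fixed bounded subset of the finite-dimensional space $(\bb M_{m,d})^L$. By the Bolzano--Weierstrass theorem I may pass to a further subsequence so that $R_i^{(k)} \to R_i$ in $\bb M_{m,d}$ for each $1 \le i \le L$.

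It then remains to read off the consequences of this convergence. By continuity of the (finite) sum we get $\sum_{i=1}^L R_i^* R_i = \bb I_d$, and since $\Psi_{n_k} \to \Psi$ we also get $\Psi(X) = \sum_{i=1}^L R_i X R_i^*$ for every $X$. Each $R_i$ is a limit of rank-one matrices, and since the condition $\rank(\cdot) \le 1$ is closed (it is cut out by the vanishing of all $2 \times 2$ minors), each $R_i$ has rank at most one. Discarding those $R_i$ that vanish leaves a Choi--Kraus representation of $\Psi$ by at most $L$ genuinely rank-one operators; by Theorem \ref{EB-equivalences} this both confirms that $\Psi$ is entanglement breaking and shows $\ebr(\Psi) \le L = \liminf_n \ebr(\Psi_n)$.

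I expect the main obstacle to be the extraction of a convergent subsequence of Kraus operators: a priori the representations at different $k$ are unrelated, and only the uniform trace bound coming from trace preservation ties them into a compact set from which a limiting decomposition can be pulled. A secondary subtlety is that rank may drop in the limit (some $R_i$ could tend to $0$); this is harmless, since it only lowers the count, but it does mean one should phrase the conclusion as an upper bound on $\ebr(\Psi)$ rather than an equality.
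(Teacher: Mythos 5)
Your proof is correct and takes essentially the same route as the paper: reduce to a subsequence along which the maps admit representations with a fixed number $L$ of rank-one Kraus operators, obtain a uniform bound on those operators, extract a convergent subsequence by compactness, and pass to the limit in the Choi--Kraus representation. The only differences are cosmetic: you derive the uniform bound from the trace-preservation identity $\sum_{i}(R_i^{(k)})^*R_i^{(k)}=\bb I_d$ whereas the paper uses $R_{j,n}R_{j,n}^*\le \Psi_n(\bb I_d)\le c\,\bb I_d$ (boundedness of the convergent sequence $\Psi_n(\bb I_d)$), and you are in fact slightly more careful than the paper in noting that the limit operators have rank \emph{at most} one and that discarding any zero limits only lowers the count.
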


\begin{proof}
It is enough to show that if $\ebr(\Psi_n) \le k$ for all $n\in \bb N$, then $\ebr(\Psi) \le k$. With that assumption, for each $n\in \bb N$, there exist rank one operators $\{R_{i,n}: 1 \le i \le k\}$ such that $\Psi_n(X) = \sum_{i=1}^k R_{i,n}XR_{i,n}^*$. Since $\Psi_n(\bb I_d)$ converges to $\Psi(\bb I_d)$ the sequence $(\Psi_n(\bb I_d))$ is bounded and so there is $c>0$ so that $\Psi_n(\bb I_d) \le c \bb I_d$ for all $n\in \bb N$. Hence, $R_{j,n}R_{j,n}^* \le \sum_{i=1}^k R_{i,n}R_{i,n}^* = \Psi_n(\bb I_d) \le c\bb I_d$, and so these rank one operators are also bounded. By compactness, we have a subsequence $(n_m)_{m=1}^{\infty}$ so that, $\lim_m R_{i, n_m}=R_i$ which is also rank one. Thus, \begin{align*}
\Psi(X) = \lim_m \Psi_{n_m}(X) = \lim_m \sum_{i=1}^k R_{i,n_m}XR_{i,n_m}^* = \sum_{i=1}^k R_i X R_i^*,
\end{align*} and the result follows.
\end{proof}

Proposition \ref{prop:ebr-lower-semicts} yields the following corollary.

\begin{cor}
Let $0\leq t\leq \frac{1}{d+1}$, and let $\Phi_t = t\cl I_d + (1-t)\Psi_d $. If \begin{align*}
\liminf_{t\to \frac{1}{d+1}^-} \ebr(\Phi_t) \leq d^2,
\end{align*} then Zauner's weak conjecture is true.
\end{cor}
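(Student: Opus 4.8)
The plan is to feed the hypothesis into the lower-semicontinuity result of Proposition~\ref{prop:ebr-lower-semicts} and conclude $\ebr(\ff Z)=d^2$, which by Corollary~\ref{equival-zauner} is equivalent to Zauner's weak conjecture in dimension $d$.

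First I would record the continuity of the family $\Phi_t$ in the parameter $t$. Since $\Phi_t = t\cl I_d+(1-t)\Psi_d$ depends affinely on $t$, we have $\Phi_t\to \ff Z$ as $t\to \frac{1}{d+1}$, because $\ff Z=\Phi_{1/(d+1)}$. Moreover, for every $t$ in the interval $[0,\frac{1}{d+1}]\subseteq[\frac{-1}{d^2-1},\frac{1}{d+1}]$, Theorem~\ref{thm:conv-comb}(c) guarantees that $\Phi_t$ is entanglement breaking, so the hypotheses of Proposition~\ref{prop:ebr-lower-semicts} are met along any sequence of such parameters.

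Next I would convert the one-sided $\liminf$ into a sequence. Since $\ebr$ takes values in the positive integers, the assumption $\liminf_{t\to \frac{1}{d+1}^-}\ebr(\Phi_t)\leq d^2$ produces a sequence $t_n\nearrow \frac{1}{d+1}$ with $\ebr(\Phi_{t_n})\leq d^2$ for all $n$. Applying Proposition~\ref{prop:ebr-lower-semicts} to the sequence $\Phi_{t_n}\to \ff Z$ then yields
\[ \ebr(\ff Z)\leq \liminf_n \ebr(\Phi_{t_n})\leq d^2. \]

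Finally I would combine this with the reverse inequality. As noted in the introduction, and since $\ebr(\ff Z)\geq \mathrm{cr}(\ff Z)=d^2$ by \eqref{eq:choi-ebr-rank}, one always has $\ebr(\ff Z)\geq d^2$; hence $\ebr(\ff Z)=d^2$, and Corollary~\ref{equival-zauner} gives Zauner's weak conjecture in dimension $d$. The only point that requires a little care --- rather than a genuine obstacle --- is the passage from the continuous-parameter $\liminf$ to a sequence to which the sequential statement of Proposition~\ref{prop:ebr-lower-semicts} applies; the integer-valuedness of $\ebr$ makes this immediate.
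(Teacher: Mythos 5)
Your proposal is correct and matches the paper's (implicit) argument: the paper derives this corollary directly from Proposition~\ref{prop:ebr-lower-semicts}, exactly as you do, by extracting a sequence $t_n \nearrow \frac{1}{d+1}$ with $\ebr(\Phi_{t_n}) \leq d^2$, passing to the limit $\ff Z = \Phi_{1/(d+1)}$, and combining the resulting bound $\ebr(\ff Z) \leq d^2$ with $\ebr(\ff Z) \geq \mathrm{cr}(\ff Z) = d^2$ and Corollary~\ref{equival-zauner}. Your attention to the integer-valuedness of $\ebr$ when converting the continuous-parameter $\liminf$ into a sequence, and to Theorem~\ref{thm:conv-comb}(c) for the entanglement-breaking hypothesis, fills in precisely the details the paper leaves to the reader.
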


This corollary motivates us to compute $\ebr(\Phi_t)$. We show that in dimensions $d=2$ and $d=3$, $\ebr(\Phi_t)=d^2$ when $0\leq t\leq \frac{1}{d+1}$. In fact, we have the following stronger conjecture than $\ebr(\Phi_t)=d^2$ for all $0\leq t\leq \frac{1}{d+1}$.

\begin{conj}\label{conj:Zauner-continuous}
Let $d\geq 2$. For $0\leq t \leq \frac{1}{d+1}$, consider the channel $\Phi_t = t\cl I_d+(1-t)\Psi_d$. There exists a set of $2d^2$ continuous functions \begin{align*}
x_i,y_i:\left[0,\frac{1}{d+1}\right]\to \bb C^d, \quad \|x_i(t)\|=\|y_i(t)\|=1 \text{ for all } 0\leq t\leq \frac{1}{d+1}, 1\leq i\leq d^2,
\end{align*} such that \begin{align*}
\Phi_t(X) = \frac{1}{d}\sum_{i=1}^{d^2} \left(x_i(t)y_i(t)^* \right) X\left(x_i(t)y_i(t)^* \right)^*, \qquad X\in \bb M_d,
\end{align*} for all $0\leq t\leq \frac{1}{d+1}$.
\end{conj}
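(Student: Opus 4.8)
The plan is to read the statement as a continuous family of product-vector factorisations of the Choi matrix. Writing $C_t$ for the Choi matrix of $\Phi_t$, one has $C_t = t\sum_{p,q}E_{p,q}\otimes E_{p,q}+\frac{1-t}{d}\bb I_{d^2}$, which (as noted after Theorem \ref{thm:conv-comb}) has rank $d^2$ throughout $[0,\frac{1}{d+1}]$. A family of vectors as in the statement is exactly a continuous choice of unit vectors with
\[ C_t=\frac{1}{d}\sum_{i=1}^{d^2}\left(x_i(t)\otimes\overline{y_i(t)}\right)\left(x_i(t)\otimes\overline{y_i(t)}\right)^*, \]
i.e. a continuous family of rank-$d^2$ factorisations of $C_t$ into \emph{product} (unentangled) rank-one terms. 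Before attempting this I would pin down the two endpoints: at $t=0$ one may take the product grid $x_{(k,l)}=e_k$, $y_{(k,l)}=e_l$, which reproduces $\Psi_d$, and at $t=\frac{1}{d+1}$ one must take $x_i=y_i=v_i$ for SIC vectors $\{v_i\}$. In particular, specialising the conjecture to $t=\frac{1}{d+1}$ recovers precisely the equivalent of Zauner's weak conjecture from Corollary \ref{ebr-Zauner}; hence no dimension-$d$ instance can be settled without a SIC POVM in that dimension, and I would restrict to $d=2,3$, where explicit Weyl--Heisenberg covariant SIC fiducials are known.

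For the construction I would exploit the covariance of $\Phi_t$: both $\cl I_d$ and $\Psi_d$, and hence $\Phi_t$, commute with $X\mapsto W_{i,j}XW_{i,j}^*$. I would therefore search for a covariant family, in which the $d^2$ pairs $(x_i(t),y_i(t))$ are organised into Weyl orbits of a small number of continuously varying seed pairs. Covariance collapses the identity ``$\frac{1}{d}\sum_i(y_i^*Xy_i)x_ix_i^*=\Phi_t(X)$'', which a priori must hold for all $X\in\bb M_d$, into finitely many scalar equations in the seed data, since it suffices to test on the Weyl basis $\{W_{p,q}\}$. The endpoints already show the orbit structure must change: the SIC at $t=\frac{1}{d+1}$ is a single orbit with $x=y$, whereas the grid at $t=0$ breaks into $d$ orbits with $x\neq y$; the interpolation must therefore reorganise these orbits as $t$ increases to $\frac{1}{d+1}$.

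Concretely, in $d=2$ (so $t\in[0,\frac13]$) I would parametrise the four vectors on the Bloch sphere and write down an explicit one-parameter interpolation between the grid configuration and the tetrahedral SIC, then verify \emph{directly} that the reduced scalar equations hold identically in $t$ and that all eight vectors stay of unit norm and depend continuously on $t$, including at both endpoints. In $d=3$ I would do the same starting from a known $d=3$ SIC fiducial, the extra freedom being absorbed by the larger Weyl orbit; here the reduced system is still a finite set of polynomial/trigonometric identities in $t$ that can be checked by hand or by computer algebra.

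The hard part is global: producing, for \emph{every} intermediate $t$, an exact decomposition while simultaneously keeping all $2d^2$ vectors of unit norm and continuous, and matching the structurally different endpoints (the grid with $x\neq y$ versus the diagonal SIC with $x=y$). Abstractly one wants a continuous section of the ``bundle'' over $[0,\frac{1}{d+1}]$ whose fibre over $t$ is the real-algebraic set of unit-norm product-vector factorisations of $C_t$; the low-dimensional cases are tractable only because one can exhibit such a section by explicit formulas. In general this section problem is obstructed on two fronts---the openness of Zauner's weak conjecture already blocks the endpoint $t=\frac{1}{d+1}$, and even granting SIC POVMs in all dimensions there is no evident reason the fibres assemble into a space admitting a global continuous section---which is why I would expect only $d=2,3$ to succumb to a direct argument.
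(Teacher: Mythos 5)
Your reading of the statement and your overall strategy coincide with the paper's: the statement is a conjecture, you rightly observe that specialising to $t=\frac{1}{d+1}$ recovers the SIC/Zauner equivalence of Corollary \ref{ebr-Zauner} (so no general proof is available), and the paper, like you, verifies it only in dimensions $d=2,3$ (Theorems \ref{thm:cont-family-dim2} and \ref{thm:cont-family-dim3}) by exactly the method you outline: a Weyl--Heisenberg covariant ansatz in which the $d^2$ Kraus pairs form a single orbit $\{(W_{i,j}x(t),W_{i,j}y(t))\}$ of one seed pair, the operator identity being reduced to finitely many scalar equations by testing on a basis (the paper compares Choi matrices, i.e.\ tests on the matrix units $E_{k,l}$; you propose the Weyl basis, an immaterial difference).

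The genuine gap is that your proposal stops exactly where the mathematical content begins: you never produce the solutions, and their existence is the whole point. For $d=2$ the reduced system is \eqref{eq:simult-eq-dim2}, and the paper's verification consists of the explicit choices $\theta=\frac{\pi}{4}$, $\theta_1=0$, $\theta_2=\frac{\pi}{4}$, the formulas $a=\frac12\bigl(\sqrt{\tfrac{1+3t}{2}}+\sqrt{\tfrac{1-t}{2}}\bigr)$, $b=\frac12\bigl(\sqrt{\tfrac{1+3t}{2}}-\sqrt{\tfrac{1-t}{2}}\bigr)$, the recovery of $r,s$ from $a,b$, and the check that $(1\pm a)^2-b^2\ge 0$ on $\bigl[0,\tfrac13\bigr]$ so that $r,s$ remain real and continuous; for $d=3$ one needs the further ad hoc structure $u=(1,\lambda,\lambda)^T$, $v=(\alpha,\beta,\beta)^T$ and genuinely nontrivial formulas for $\alpha,\lambda,\beta$. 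A parametrized finite system of real polynomial/trigonometric equations need not admit any continuous unit-norm solution branch over a whole interval, so ``verify directly \dots by hand or by computer algebra'' is a promissory note, not a verification. Separately, one heuristic you lean on is wrong: you claim the endpoints are structurally incompatible (the grid at $t=0$ splitting into $d$ orbits with $x\neq y$, versus a single diagonal orbit at $t=\frac{1}{d+1}$), so that the interpolation must ``reorganise'' orbits. In fact the single-orbit ansatz already covers $t=0$: taking $x(0)=e_1$ and $y(0)$ unbiased with respect to the standard basis, the identity $\sum_{j}V^{j}y(0)y(0)^*V^{-j}=\bb I_d$ forces $\widetilde{\Phi}_0=\Psi_d$, and this is precisely what the paper's $d=2$ formulas degenerate to at $t=0$ (there $r=1$, $s=\tfrac{1}{\sqrt2}$). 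So no reorganisation occurs, and matching the grid decomposition is a red herring; the only real obstacle is solving the reduced equations globally, which your proposal leaves undone.
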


We show that for dimensions $d=2$ and $d=3$, we can indeed find such continuous families of vectors. To find them we assume a covariance property and some ad hoc assumptions. Indeed, instead of $2d^2$ continuous functions, now we wish to find two continuous functions, \begin{align*}
x,y:\left[0,\frac{1}{d+1} \right]\to \bb C^d, \quad \|x(t)\| = \|y(t)\| = 1 \text{ for all } 0\leq t\leq \frac{1}{d+1},
\end{align*} such that the map defined by \begin{align*}
\widetilde{\Phi}_t(X) = \frac{1}{d}\sum_{i,j=0}^{d-1} \left(\left(W_{i,j}x(t) \right)\left(W_{i,j}y(t) \right)^*\right) X \left(\left(W_{i,j}x(t) \right)\left(W_{i,j}y(t) \right)^*\right)^*
\end{align*} satisfies $\Phi_t=\widetilde{\Phi}_t$, where of course $\{W_{i,j}:0\leq i,j\leq d-1\}$ are the discrete Weyl matrices, and $0\leq t\leq \frac{1}{d+1}$. To find the functions $x$ and $y$, we solve a set of simultaneous equations obtained by comparing their Choi-matrices: \begin{align*}
\Phi_t(E_{k,l})=\widetilde{\Phi}_t(E_{k,l}), \qquad 0\leq k,l\leq d-1,
\end{align*} where $\{E_{k,l}:0\leq k,l\leq d-1\}$ are the canonical matrix units of $\bb M_d$.

The following two lemmas are the result of elementary calculations.

\begin{lemma}\label{lem:WijEklWij}
Let $\{W_{i,j}:0\leq i,j\leq d-1\}\subseteq \bb M_d$ be the discrete Weyl matrices. If $E_{p,q}\in \bb M_d$ is any canonical matrix unit with $p\neq q$ and $0\leq p,q\leq d-1$, then \begin{align*}
\sum_{j=0}^{d-1}W_{i,j}^*E_{p,q}W_{i,j} = 0.
\end{align*}
\end{lemma}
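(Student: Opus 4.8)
The plan is to reduce the sum to a pinching by the diagonal unitary $V$ and then invoke the orthogonality of the characters of $\bb Z_d$. First I would use the factorization $W_{i,j} = U^iV^j$, together with the unitarity of $U$ and $V$ (so that $W_{i,j}^* = V^{-j}U^{-i}$), to write
\[
W_{i,j}^* E_{p,q} W_{i,j} = V^{-j}\bigl(U^{-i} E_{p,q} U^i\bigr) V^j .
\]
The point of this rewriting is that the outer conjugation by powers of $V$ is the only part that depends on the summation index $j$, while the inner conjugation by $U^i$ is a fixed (for fixed $i$) change of matrix unit.

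Next I would compute that inner term. Since $U$ is the cyclic shift, $U^i e_k = e_{(k+i)\bmod d}$, and hence
$U^{-i} E_{p,q} U^i = (U^{-i} e_p)(U^{-i} e_q)^* = E_{p',q'}$, where $p' = (p-i)\bmod d$ and $q' = (q-i)\bmod d$. The crucial observation here is that because $p \neq q$, we still have $p' \neq q'$, so $E_{p',q'}$ remains an \emph{off-diagonal} matrix unit after conjugation by $U^i$.

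Then I would observe that summing the conjugation by the diagonal matrix $V$ over $j$ produces a pinching onto the diagonal: for any $A\in\bb M_d$,
\[
\sum_{j=0}^{d-1} V^{-j} A V^j = d\sum_{k=0}^{d-1} A_{k,k}\,E_{k,k},
\]
which follows from $(V^{-j}AV^j)_{k,l} = \omega^{j(l-k)} A_{k,l}$ and the character sum $\sum_{j=0}^{d-1}\omega^{j(l-k)} = d\,\delta_{k,l}$. Applying this with $A = E_{p',q'}$ and using $p'\neq q'$ (so that every diagonal entry $(E_{p',q'})_{k,k}$ vanishes) gives $\sum_{j=0}^{d-1} V^{-j} E_{p',q'} V^j = 0$, which is exactly the claimed identity.

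The computation is entirely elementary, so there is no genuine obstacle; the only thing to be careful about is the bookkeeping of indices modulo $d$ and the conjugates appearing in the adjoints, in particular the identity $e_q^* V^j = (V^{-j} e_q)^*$ and the analogous one for $U^i$. The one structural fact driving the vanishing is the orthogonality of the characters $j \mapsto \omega^{j(q'-p')}$ of $\bb Z_d$, whose sum is zero precisely because $q'\not\equiv p' \pmod d$.
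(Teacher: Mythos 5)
Your proof is correct. The paper in fact gives no proof of this lemma at all---it is dismissed as ``the result of elementary calculations''---and your argument (conjugation by $U^{i}$ sends the off-diagonal matrix unit $E_{p,q}$ to another off-diagonal matrix unit $E_{p',q'}$, and averaging the conjugation by powers of $V$ is a pinching onto the diagonal via the character sum $\sum_{j=0}^{d-1}\omega^{j(l-k)}=d\,\delta_{k,l}$) is exactly the intended elementary computation, with all index bookkeeping and adjoints handled correctly.
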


\begin{lemma}\label{lem:prop-of-Wij}
Let $x,y\in \bb C^d$ be unit vectors and let $\{W_{i,j}:0\leq i,j\leq d-1\}\subseteq \bb M_d$ be the discrete Weyl matrices. Let $\Phi:\bb M_d\to \bb M_d$ be a linear map defined by \begin{align*}
\Phi(X) = \sum_{i,j=0}^{d-1} (W_{i,j}x)(W_{i,j}y)^*X(W_{i,j}y)(W_{i,j}x)^* = \sum_{i,j=0}^{d-1} \inner{W_{i,j}^*XW_{i,j}y}{y}W_{i,j}xx^*W_{i,j}^*.
\end{align*} If $E_{k,l}\in \bb M_d$ is a canonical matrix unit, then, $\Phi(E_{k,k})$ is a diagonal matrix for all $0\leq k\leq d-1$; and if $k\neq l$, then the diagonal entries of $\Phi(E_{k,l})$ are all zero.
\end{lemma}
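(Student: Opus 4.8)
The plan is to compute the matrix entries of $\Phi(E_{k,l})$ directly and to read off both assertions from a single phase-sum over the index $j$. The organizing principle is that conjugation by the diagonal Weyl matrix $V$ detects diagonality: with $\omega = \exp(2\pi i/d)$, a matrix $M$ is supported only on the ``$s$-th diagonal'' $\{(a,b): a-b\equiv s \pmod d\}$ precisely when $VMV^* = \omega^s M$; in particular $M$ is diagonal if and only if $VMV^*=M$. I expect $\Phi(E_{k,l})$ to be supported on the single diagonal class $a-b\equiv k-l$, which yields Claim~1 (take $k=l$) and Claim~2 (the diagonal $a=b$ then forces $k\equiv l$) simultaneously.

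First I would record the elementary action of the Weyl matrices on the standard basis, $W_{i,j}e_m = \omega^{jm} e_{m+i}$ (subscripts read mod $d$), together with the adjoint action $W_{i,j}^* e_n = \omega^{-j(n-i)} e_{n-i}$. From these, exactly as in Lemma~\ref{lem:WijEklWij}, one obtains the conjugation formula
\[ W_{i,j}^* E_{k,l} W_{i,j} = \omega^{j(l-k)}\, E_{k-i,\, l-i}, \]
so that the scalar coefficient appearing in $\Phi$ is $\inner{W_{i,j}^* E_{k,l} W_{i,j}\, y}{y} = \omega^{j(l-k)}\, y_{l-i}\,\overline{y_{k-i}}$, where $y_m$ denotes the $m$-th coordinate of $y$ and the inner product convention is $\inner{u}{v}=v^*u$.

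Next I would expand the rank-one term. Since $(W_{i,j}x)_a = \omega^{j(a-i)} x_{a-i}$, the $(a,b)$-entry of $W_{i,j}\,xx^*\,W_{i,j}^*$ equals $\omega^{j(a-b)} x_{a-i}\overline{x_{b-i}}$. Substituting both pieces into
\[ \Phi(E_{k,l}) = \sum_{i,j=0}^{d-1} \omega^{j(l-k)}\, y_{l-i}\,\overline{y_{k-i}}\; W_{i,j}\, xx^*\, W_{i,j}^*, \]
the $(a,b)$-entry becomes $\sum_i y_{l-i}\overline{y_{k-i}}\, x_{a-i}\overline{x_{b-i}}\bigl(\sum_{j=0}^{d-1}\omega^{j(l-k+a-b)}\bigr)$. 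The inner geometric sum equals $d$ when $l-k+a-b\equiv 0 \pmod d$ and vanishes otherwise, so the entry is nonzero only on the diagonal class $a-b\equiv k-l$.

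Both claims then follow. For $k=l$ the surviving entries require $a=b$, so $\Phi(E_{k,k})$ is diagonal; for $k\neq l$ a diagonal entry $a=b$ would force $k\equiv l\pmod d$, impossible for distinct $k,l\in\{0,\dots,d-1\}$, so every diagonal entry of $\Phi(E_{k,l})$ vanishes. The only genuine work is the bookkeeping of the cyclic indices and the inner-product convention; the single nontrivial input is the vanishing of the character sum $\sum_j \omega^{jm}$ for $m\not\equiv 0$, which is the same mechanism already used in Lemma~\ref{lem:WijEklWij}. A more conceptual alternative would first establish the covariance $W_{p,q}\,\Phi(X)\,W_{p,q}^* = \Phi(W_{p,q}\,X\,W_{p,q}^*)$ from the Weyl commutation relation $W_{i,j}W_{p,q}=\omega^{jp}W_{i+p,\,j+q}$ and then invoke the $V$-conjugation criterion for diagonality, but the direct computation above is shorter and is what I would present.
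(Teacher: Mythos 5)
Your proof is correct, and it is essentially the argument the paper intends: the authors omit the proof entirely, stating only that the lemma is "the result of elementary calculations," and your direct computation — the Weyl action $W_{i,j}e_m=\omega^{jm}e_{m+i}$, the conjugation formula $W_{i,j}^*E_{k,l}W_{i,j}=\omega^{j(l-k)}E_{k-i,l-i}$, and the vanishing character sum $\sum_j \omega^{jm}=0$ for $m\not\equiv 0 \pmod d$ — is exactly that calculation, carried out correctly. Your observation that $\Phi(E_{k,l})$ is supported on the single diagonal class $a-b\equiv k-l \pmod d$ is in fact slightly stronger than the stated lemma and yields both claims at once.
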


We now prove Conjecture \ref{conj:Zauner-continuous} for $d=2$ and $d=3$. We remark that there are many possible solutions for the functions $x$ and $y$, but we shall not be occupied with finding all of them.

\begin{theorem}\label{thm:cont-family-dim2}
There exists continuous functions $x,y:\left[0,\frac{1}{3}\right]\to \mathbb{C}^2$ such that $\|x(t)\|=\|y(t)\|=1$, and such that \begin{align*}
\Phi_t(X) = \sum_{i,j=0}^{1} (W_{i,j}x(t))(W_{i,j}y(t))^*X(W_{i,j}y(t))(W_{i,j}x(t))^*, \qquad X\in \bb M_2 
\end{align*} for all $0\leq t\leq \frac{1}{3}$, where $\Phi_t=t\cl I_2+(1-t)\Psi_2$ and $\{W_{i,j}:0\leq i,j\leq 1\}$ are the discrete Weyl matrices.
\end{theorem}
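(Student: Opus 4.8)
The plan is to solve the system of equations obtained by equating the Choi matrices of $\Phi_t$ and $\widetilde{\Phi}_t$, exploiting the Weyl covariance to drastically reduce the number of unknowns. First I would write $x(t) = (a_0, a_1)^{\mathrm{T}}$ and $y(t) = (b_0, b_1)^{\mathrm{T}}$ as explicit vectors in $\bb C^2$ depending on $t$, with the normalization $\|x(t)\| = \|y(t)\| = 1$ built in. By Lemma \ref{lem:prop-of-Wij}, the map $\widetilde\Phi_t$ automatically sends $E_{0,0}, E_{1,1}$ to diagonal matrices and sends the off-diagonal units $E_{0,1}, E_{1,0}$ to matrices with vanishing diagonal. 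Since $\Phi_t = t\cl I_2 + (1-t)\Psi_2$ manifestly has this same block structure (it sends diagonal to diagonal and off-diagonal to off-diagonal), the covariance ansatz is consistent, and I only need to match the nonzero entries.

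Concretely, I would compute the four equations $\Phi_t(E_{k,l}) = \widetilde\Phi_t(E_{k,l})$ for $0 \le k,l \le 1$ entry-by-entry. The diagonal equations from $E_{0,0}$ and $E_{1,1}$ will constrain the moduli $|a_0|^2, |a_1|^2, |b_0|^2, |b_1|^2$, while the off-diagonal equations from $E_{0,1}$ will constrain the cross terms $a_0\overline{a_1}$ and $b_0\overline{b_1}$ (and their conjugates, coming from $E_{1,0}$). The key simplification is that, after summing over the four Weyl matrices $\{W_{i,j} : 0 \le i,j \le 1\}$, these entry-wise conditions collapse into a small number of algebraic relations in the parameters, parametrized by $t$. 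I would then exhibit an explicit continuous choice of phases and magnitudes solving these relations on the whole interval $[0, \tfrac13]$; a natural move is to let one of the vectors be fixed (say $y(t)$ a constant computational basis-type vector) and let $x(t)$ carry the $t$-dependence, or to let both vary through a single real angular parameter $\theta(t)$ that interpolates continuously, checking that at the endpoint $t = \tfrac13$ one recovers the SIC/equiangular configuration guaranteed to exist in dimension $2$.

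The main obstacle I anticipate is not the existence of a pointwise solution — for each fixed $t$ the system is a handful of polynomial equations in a few real unknowns and should be solvable — but rather arranging the solution to depend \emph{continuously} on $t$ across the entire interval, including the two endpoints. At $t = 0$ the channel is completely depolarizing and the vectors need only form a suitable tight frame, whereas at $t = \tfrac13$ they must generate a SIC POVM; the danger is that a naive branch of solutions (e.g. from taking square roots or resolving a phase ambiguity) develops a singularity, a sign change forcing a discontinuity, or a collision of roots somewhere in the interior. I would therefore choose the free parameters so that all quantities under a square root stay strictly nonnegative on $[0,\tfrac13]$ and so that the phases vary smoothly, verifying by inspection that the resulting $x(t), y(t)$ are genuinely continuous (indeed, one can make them real-analytic away from trivial degeneracies). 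Because $d = 2$ is small, I expect the final continuous family to be given by completely explicit closed-form expressions, and the verification that $\Phi_t = \widetilde\Phi_t$ will then reduce to the elementary calculations already flagged before Lemma \ref{lem:WijEklWij}.
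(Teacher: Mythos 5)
Your high-level plan is exactly the route the paper takes: impose Weyl covariance, use Lemma \ref{lem:prop-of-Wij} to reduce the comparison of $\Phi_t(E_{k,l})$ with $\widetilde{\Phi}_t(E_{k,l})$ to a handful of scalar equations, and then produce an explicit continuous solution in $t$. The genuine gap is that your proposal stops exactly where the content of the theorem begins---no solution is ever exhibited---and both concrete shortcuts you suggest for producing one provably fail. If $y(t)$ is a constant computational-basis-type vector, say $y=e_0$, then $W_{i,j}e_0=U^iV^je_0=e_i$, so every Kraus operator has the form $(W_{i,j}x)e_i^*$, and therefore
\begin{align*}
\widetilde{\Phi}_t(E_{0,1})=\sum_{i,j=0}^{1}(W_{i,j}x)\bigl(e_i^*E_{0,1}e_i\bigr)(W_{i,j}x)^*=0\neq tE_{0,1}=\Phi_t(E_{0,1})
\end{align*}
for every $t>0$: a family covariant with respect to this ansatz annihilates all off-diagonal entries, so it can only reproduce $\Phi_0=\Psi_2$. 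Likewise, coupling both vectors through a single parameter (in the extreme case $x=y$) is impossible in the interior of the interval: writing $x=(r,\sqrt{1-r^2}e^{i\theta})$, $y=(se^{i\theta_1},\sqrt{1-s^2}e^{i\theta_2})$, $a=rs$, $b=\sqrt{(1-r^2)(1-s^2)}$, the entrywise comparison reduces to
\begin{align*}
a^2+b^2=\tfrac{1+t}{2},\qquad 2ab\cos(\theta+\theta_1-\theta_2)=t,\qquad 2ab\cos(\theta-\theta_1+\theta_2)=0,
\end{align*}
and $x=y$ forces $r=s$, $\theta_1=0$, $\theta_2=\theta$, hence $a=r^2$, $b=1-r^2$, so $a^2+b^2=1-2ab=1-t$, which together with the first equation pins $t=\tfrac13$. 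So the only points where your simplified families exist are the endpoints; any correct solution must let the moduli of the two vectors differ and both vary with $t$.

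What the paper supplies, and what is missing from your outline, is the explicit branch: fix the phases $\theta=\theta_2=\tfrac{\pi}{4}$, $\theta_1=0$, so the third equation holds identically and the second becomes $2ab=t$; then
\begin{align*}
(a+b)^2=\tfrac{1+3t}{2},\qquad(a-b)^2=\tfrac{1-t}{2}
\end{align*}
give continuous $a(t),b(t)$, and inverting $rs=a$, $(1-r^2)(1-s^2)=b^2$ yields
\begin{align*}
r,s=\tfrac12\Bigl(\sqrt{(1+a)^2-b^2}\pm\sqrt{(1-a)^2-b^2}\Bigr).
\end{align*}
Continuity on all of $\bigl[0,\tfrac13\bigr]$ is not automatic but must be checked: $(1-a)^2-b^2=(1-a-b)(1-a+b)$ stays nonnegative precisely because $a+b=\sqrt{(1+3t)/2}\le 1$ when $t\le\tfrac13$, so the square roots never leave the reals---this is the branch issue you correctly anticipated, and it is exactly what delimits the interval. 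Note also that in this solution $x(0)=e_0$ is a basis vector while $y(0)$ is unbiased to the basis, and $r=s$ only at $t=\tfrac13$; both vectors genuinely move with $t$. Until your outline commits to, and verifies, some such closed-form family, it is a plan for a proof rather than a proof.
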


\begin{proof}
Without loss of generality we may assume that \begin{align*}
x = \begin{bmatrix}
r \\ \sqrt{1-r^2}e^{i\theta}
\end{bmatrix}, \qquad y = \begin{bmatrix}
se^{i\theta_1} \\ \sqrt{1-s^2}e^{i\theta_2}
\end{bmatrix},
\end{align*} where $0\leq r,s\leq 1$ and $\theta,\theta_1,\theta_2\in [0,2\pi)$, and where we have suppressed the dependence on $t$.

We set
 \begin{align}\label{eq:values-of-theta}
\theta = \frac{\pi}{4}, \qquad \theta_1 = 0, \qquad \theta_2 = \frac{\pi}{4}.
\end{align} 

We define 
\begin{align}\label{eq:ab(t)}
a = \frac{1}{2}\left(\sqrt{\frac{1+3t}{2}} + \sqrt{\frac{1-t}{2}} \right), \qquad b = \frac{1}{2}\left(\sqrt{\frac{1+3t}{2}} - \sqrt{\frac{1-t}{2}} \right),
\end{align}
which are continuous functions of $t$ for $0 \le t \le 1$. Calculation shows that $(1+a)^2 - b^2$ and $(1-a)^2 - b^2$ are both greater than or equal to 0 for $0 \le t \le 1/3$.

Consequently,
\begin{equation}\label{eq:rs(ab)}
\begin{split}
r &= \frac{1}{2}\left(\sqrt{(1+a)^2-b^2} + \sqrt{(1-a)^2-b^2} \right) \\
s &= \frac{1}{2}\left(\sqrt{(1+a)^2-b^2} - \sqrt{(1-a)^2-b^2} \right)
\end{split}
\end{equation} 
are continuous functions on the interval $0 \le t \le 1/3.$

We leave it to the reader to verify that with these definitions, the functions $x(t)$ and $y(t)$ satisfy the theorem. \end{proof}

While the proof is complete, it does not indicate how we arrived at these solutions. 
If we set \begin{align}\label{eq:ab-and-rs}
a=rs, \qquad b=\sqrt{(1-r^2)(1-s^2)}.
\end{align} With the help of Lemma \ref{lem:prop-of-Wij} and comparing $\widetilde{\Phi}_t(E_{k,l})$ with $\Phi_t(E_{k,l})$, we get the following simultaneous equations: \begin{equation}\label{eq:simult-eq-dim2}
\begin{split}
\frac{1+t}{2} &=  a^2+b^2, \\
t &= 2ab\cos(\theta+\theta_1-\theta_2), \\
0 &= 2ab\cos(\theta-\theta_1+\theta_2).
\end{split}
\end{equation}
Our formulas for $r$ and $s$ came from choosing particular values for the $\theta$'s, solving these equations for $a$ and $b$ and then expressing $r$ and $s$ in terms of $a$ and $b$.

When $t=\frac{1}{3}$, it is straightforward to compute from the above equations that \begin{align*}
x\left(\frac{1}{3}\right) = y\left(\frac{1}{3} \right) = \frac{1}{\sqrt{6}} \begin{bmatrix}
\sqrt{3+\sqrt{3}} \\ e^{\frac{i\pi}{4}}\sqrt{3-\sqrt{3}}
\end{bmatrix},
\end{align*} which is indeed one of the fiducial vectors in dimension two mentioned in Ref. \cite{RBSC}. To get the other fiducial vector mentioned there, if in the above equations we had chosen $\theta=\frac{5\pi}{4}$, then one finds that the solutions of the above equations yield the other fiducial vector in Ref. \cite{RBSC}.

\begin{theorem}\label{thm:cont-family-dim3}
There exists continuous functions $x,y:\left[0,\frac{1}{4}\right]\to \mathbb{C}^3$ such that $\|x(t)\|=\|y(t)\|=1$, and such that \begin{align*}
\Phi_t(X) = \sum_{i,j=0}^{2} (W_{i,j}x(t))(W_{i,j}y(t))^*X(W_{i,j}y(t))(W_{i,j}x(t))^*, \qquad X\in \bb M_3, 
\end{align*} for all $0\leq t\leq \frac{1}{4}$, where $\Phi_t=t\cl I_3+(1-t)\Psi_3$, and $\{W_{i,j}:0\leq i,j\leq 2\}$ are the discrete Weyl matrices.
\end{theorem}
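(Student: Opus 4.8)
The plan is to mirror the structure of the proof of Theorem \ref{thm:cont-family-dim2}, but now in dimension $d=3$, where the parametrization of the vectors $x(t)$ and $y(t)$ is more involved. I would first invoke the covariance ansatz already set up in the discussion preceding the theorem: rather than seeking all $2d^2 = 18$ vectors, I exploit the Weyl-covariant form $\widetilde{\Phi}_t$ and reduce the problem to finding just two continuous unit-vector-valued functions $x,y:[0,1/4]\to\bb C^3$. By Lemma \ref{lem:prop-of-Wij}, $\widetilde{\Phi}_t(E_{k,k})$ is diagonal and $\widetilde{\Phi}_t(E_{k,l})$ has zero diagonal for $k\neq l$, which matches the structure of $\Phi_t(E_{k,l})$ and drastically cuts down the number of independent equations one obtains by comparing Choi matrices $\Phi_t(E_{k,l})=\widetilde{\Phi}_t(E_{k,l})$ for $0\leq k,l\leq 2$.

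Next I would write $x$ and $y$ in a convenient parametrized form analogous to the $d=2$ case, namely with modulus parameters and phase parameters,
\begin{align*}
	x = \begin{bmatrix} r_1 \\ r_2 e^{i\alpha_2} \\ r_3 e^{i\alpha_3} \end{bmatrix}, \qquad
	y = \begin{bmatrix} s_1 e^{i\beta_1} \\ s_2 e^{i\beta_2} \\ s_3 e^{i\beta_3} \end{bmatrix},
\end{align*}
with $\sum_k r_k^2 = \sum_k s_k^2 = 1$, again suppressing the dependence on $t$. Carrying out the comparison in Lemma \ref{lem:prop-of-Wij} and Lemma \ref{lem:WijEklWij}, the off-diagonal vanishing conditions become phase equations that fix (or constrain) the $\alpha$'s and $\beta$'s, while the diagonal conditions from $\Phi_t(E_{k,k})$ become a system of equations in the moduli and a few surviving products of the form $\inner{x}{y}$-type inner products. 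As in the $d=2$ proof, I would choose specific constant values for the phase parameters so that the phase equations hold identically in $t$, leaving a reduced system of algebraic equations in the remaining modulus variables whose right-hand sides are affine in $t$ (reflecting $\Phi_t = t\cl I_3 + (1-t)\Psi_3$).

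The heart of the proof is then to solve this reduced system and exhibit the solution as explicit continuous functions of $t$ on $[0,1/4]$. Following the $d=2$ template, I expect to introduce intermediate continuous functions $a(t), b(t), \dots$ (nested square roots of affine-in-$t$ expressions), verify that all radicands stay nonnegative throughout $0\leq t\leq 1/4$, and then back out $r_1,r_2,r_3,s_1,s_2,s_3$ continuously. The main obstacle will be the algebraic bookkeeping: ensuring that the larger system in $d=3$ is genuinely consistent, that one can decouple the phase and modulus equations by a clever fixed choice of phases, and above all that the various radicands remain nonnegative on the \emph{entire} interval $[0,1/4]$ (the boundary $t=1/4$ being where the endpoint should reproduce a known dimension-three fiducial vector, providing a sanity check). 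As in Theorem \ref{thm:cont-family-dim2}, once the closed-form functions are produced I would state that verification of the identity $\Phi_t=\widetilde{\Phi}_t$ is a direct computation left to the reader, since the genuine content is the \emph{existence and continuity} of the solution rather than the routine check.
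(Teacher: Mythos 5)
Your strategy matches the paper's in outline --- Weyl covariance reducing the problem to two vector-valued functions, comparison of Choi matrices via Lemmas \ref{lem:WijEklWij} and \ref{lem:prop-of-Wij}, and an endpoint sanity check against a known fiducial vector --- but the proposal stops exactly where the actual proof has to start. The theorem is an existence statement whose only known proof is an explicit construction; writing that you ``expect'' to introduce nested radicals, that the ``main obstacle will be the algebraic bookkeeping,'' and that one must ensure ``the larger system in $d=3$ is genuinely consistent'' describes the difficulty rather than resolving it. With your general parametrization (six moduli and five independent phases), comparing the entries of $\Phi_t(E_{k,l})$ and $\widetilde{\Phi}_t(E_{k,l})$ yields a sizable nonlinear system, and there is no a priori reason that fixing the phases to constants decouples it the way it did for $d=2$; without exhibiting a solution, or at least an argument that one exists, the proof is incomplete.

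What the paper actually does is impose a far more rigid ansatz than yours: both vectors are real (no phases at all), with equal second and third coordinates, namely $u(t)=(1,\lambda,\lambda)^T$ and $v(t)=(\alpha,\beta,\beta)^T$, so the entire problem collapses to three scalar functions of $t$. It then gives closed forms: $\alpha^2=\frac{5+4t+4\sqrt{1+7t-8t^2}}{81}$, $\lambda=\frac{-1+\sqrt{1+4\rho}}{2}$ where $\rho=\frac{1+2t-\sqrt{1+7t-8t^2}}{1-4t}$, and $\beta=-\alpha(\lambda+1)$, and sets $x(t)=\sqrt{3}\,\|v(t)\|\,u(t)$, $y(t)=v(t)/\|v(t)\|$, verifying $\|x(t)\|^2=3\alpha^2(4\rho^2+4\rho+3)=1$. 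Two points your plan misses: first, under the right ansatz the phase bookkeeping you anticipate disappears entirely, since the solution is real; second, continuity on the \emph{closed} interval is itself a genuine issue, because $\rho$ carries the factor $1-4t$ in its denominator, and one must check that the singularity at $t=\frac14$ is removable (it is: the numerator also vanishes there, and the limit produces the fiducial vector $x(1/4)=y(1/4)=\bigl(\sqrt{2/3},\,-1/\sqrt{6},\,-1/\sqrt{6}\bigr)^T$). Until you produce such explicit formulas (or some other solvability argument) and carry out these checks, the claim remains unproven.
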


\begin{proof}
Consider the functions $u,v:\left[0,\frac{1}{4} \right]\to \bb C^3$ defined by \begin{align*}
u(t) = \begin{bmatrix}
1 \\ \lambda \\ \lambda
\end{bmatrix}, \qquad v(t) = \begin{bmatrix}
\alpha \\ \beta \\ \beta
\end{bmatrix},
\end{align*} where $\lambda, \alpha, \beta$ are also functions of $t$, but we have suppressed the dependence upon $t$. The function $\alpha$ is given by the positive square root of \begin{align*}
\alpha^2 = \frac{5+4t+4\sqrt{1+7t-8t^2}}{81}.
\end{align*} The function $\lambda$ is a solution of $\lambda(\lambda+1)=\rho$, say $\lambda=\frac{-1+\sqrt{1+4\rho}}{2}$, where $\rho$ is given by \begin{align*}
\rho = \frac{1+2t-\sqrt{1+7t-8t^2}}{1-4t}.
\end{align*} Finally, $\beta$ is given by $\beta=-\alpha(\lambda+1)$. 

Now define $x,y:\left[0,\frac{1}{4} \right]\to \bb C^3$ as \begin{align*}
x(t) = \sqrt{3}\|v(t)\|u(t), \qquad y(t) = \frac{v(t)}{\|v(t)\|}.
\end{align*} Clearly, $\|y(t)\|=1$ and another string of calculations shows: \begin{align*}
\|x(t)\|^2 = 3\alpha^2(1+2\lambda^2)(2\lambda^2+4\lambda+3) = 3\alpha^2(4\rho^2+4\rho+3) = 1.
\end{align*}

Computations show that $x$ and $y$ are continuous on the domain and meet the requirements of the theorem.  \end{proof}

One finds that, \begin{align*}
x\left( \frac{1}{4}\right) = \begin{bmatrix}
\sqrt{\frac{2}{3}} \\ \frac{-1}{\sqrt{6}} \\ \frac{-1}{\sqrt{6}}
\end{bmatrix} = y\left(\frac{1}{4} \right).
\end{align*} This particular fiducial vector indeed matches with \cite{RBSC} when $r_0=\sqrt{\frac{2}{3}}$ and $\theta_1=\theta_2=\pi$.

\section{Mutually unbiased bases}\label{sec:MUB}
The existence problem of the existence of $d+1$ mutually unbiased bases in $\bb C^d$ ($d\geq 2$) is another major open problem in quantum information theory (see Ref. \cite{DEBZ,Kantor}). Earlier we saw that Zauner's weak conjecture for dimension $d$ is equivalent to $\ebr(\ff Z)=d^2$. This leaves open the question of what types of bounds one can obtain on $\ebr(\ff Z)$. In this final section we show that, whenever $d+1$ mutually unbiased bases exist, then $\ebr(\ff Z) \le d(d+1)$. It is known that $d+1$ mutually unbiased bases do exist whenever $d$ is a prime power (see Ref.\cite{Ivan81,WF,BBRV}). So our result holds whenever $d$ is a prime power.

\begin{definition}
Let $\cl V = \{v_1, ..., v_d\}$ and $\cl W=\{w_1,...,w_d\}$ be two orthonormal bases of $\bb C^d$. We say that $\cl V$ and $\cl W$ are \emph{mutually unbiased} if $|\inner{v_i}{w_j}|^2 = \frac{1}{d}$ for all $1\leq i,j\leq d$. (It is easy to see that if $|\inner{v_i}{w_j}|^2=c$ is constant for all $1\leq i,j\leq d$, then $c=\frac{1}{d}$.)
\end{definition}

In what follows, our aim is to show that if there exist $d+1$ mutually unbiased bases $\cl V_i = \{v_{i,j}:1\leq j\leq d\}$ ($1\leq i\leq d+1$) for $\bb C^d$, then the channel defined by \begin{align}\label{eq:channel-from-MUB}
\Phi(X) = \frac{1}{d+1}\sum_{i=1}^{d+1}\sum_{j=1}^d P_{i,j}XP_{i,j}, \qquad X\in \bb M_d
\end{align} where $P_{i,j}$ is the projection onto $\text{span}\{v_{i,j}\}$, is nothing but the channel $\ff Z$.

\begin{lemma}\label{LI-set-MUB}
For $1\leq i\leq d+1$, let $\cl V_i = \{v_{i,j}:1\leq j\leq d\}$ be a collection of $d+1$ mutually unbiased bases of $\bb C^d$. If $P_{i,j}=v_{i,j}v_{i,j}^*$ for all $i,j$, then the set \begin{align*}
\cl P = \left\lbrace P_{1,j}:1\leq j\leq d \right\rbrace \cup  \left\lbrace P_{i,j}: 2\leq i\leq d+1, 1\leq j\leq d-1 \right\rbrace \subseteq \bb M_d,
\end{align*} is a basis for $\bb M_d$.
\end{lemma}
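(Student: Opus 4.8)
The plan is to show that $\cl P$ is linearly independent; since a direct count gives $|\cl P| = d + d(d-1) = d^2 = \dim \bb M_d$, linear independence alone already forces $\cl P$ to be a basis. I would work throughout with the Hilbert--Schmidt inner product $\inner{A}{B} = \tr(A^*B)$ on $\bb M_d$ and exploit the trace relations that mutual unbiasedness forces on the rank-one projections. For rank-one projections one has $\tr(P_{i,j}P_{i',j'}) = |\inner{v_{i,j}}{v_{i',j'}}|^2$, so $\tr(P_{i,j}P_{i,j'}) = \delta_{jj'}$ within a single basis, while $\tr(P_{i,j}P_{i',j'}) = \tfrac1d$ whenever $i \ne i'$.

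The key device is to pass to the traceless parts $\widetilde P_{i,j} := P_{i,j} - \tfrac1d \bb I_d$. A short computation with the inner products above shows $\tr(\widetilde P_{i,j}\widetilde P_{i',j'}) = \tr(P_{i,j}P_{i',j'}) - \tfrac1d$, which vanishes precisely when $i \ne i'$. Hence the traceless parts coming from distinct bases are Hilbert--Schmidt orthogonal; this is the one place the full strength of the MUB hypothesis is used, and I expect it to be the crux of the argument. Now suppose
\[ \sum_{j=1}^{d} a_j P_{1,j} + \sum_{i=2}^{d+1}\sum_{j=1}^{d-1} b_{i,j} P_{i,j} = 0. \]
Substituting $P = \widetilde P + \tfrac1d\bb I_d$ and taking the trace annihilates all the traceless terms and yields $\sum_j a_j + \sum_{i,j} b_{i,j} = 0$, so the coefficient of $\bb I_d$ drops out and the relation collapses to a vanishing combination of the $\widetilde P_{i,j}$ alone.

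Finally I would exploit the orthogonal decomposition. Grouping the surviving relation by basis, the summand lying in the span of the $\widetilde P_{1,j}$ is orthogonal to each summand coming from a basis $i \ge 2$, so every group must vanish separately. For each $i \ge 2$ the operators $\{\widetilde P_{i,j}\}_{j=1}^{d}$ obey the single relation $\sum_{j} \widetilde P_{i,j} = 0$ (because $\sum_j P_{i,j} = \bb I_d$), hence any $d-1$ of them are independent; as $\cl P$ retains exactly $\widetilde P_{i,1},\dots,\widetilde P_{i,d-1}$, all $b_{i,j} = 0$. For the first basis the same relation forces $a_1 = \cdots = a_d$, and then the trace identity $\sum_j a_j = 0$ (now that the $b$'s vanish) gives $a_j = 0$. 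Thus every coefficient is zero and $\cl P$ is a basis. The mild asymmetry, keeping all $d$ projections from the first basis but only $d-1$ from each of the others, is exactly what supplies the extra identity direction needed to reach dimension $d^2$, and tracking this bookkeeping is the only subtlety beyond the orthogonality computation.
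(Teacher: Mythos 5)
Your proof is correct, but it takes a genuinely different route from the paper's. The paper argues entirely with scalar equations: starting from a vanishing combination $\lambda_{1,d}P_{1,d}+\sum_{i,j}\lambda_{i,j}P_{i,j}=0$, it takes the trace, then multiplies by each $P_{k,l}$ and takes traces again, using the relations $\tr(P_{i,j}P_{k,l})\in\{0,1,\tfrac1d\}$ to deduce that within each basis the coefficients are equal, and finally combines these equalities with the total-trace identity to force every coefficient to zero. You instead pass to the traceless parts $\widetilde P_{i,j}=P_{i,j}-\tfrac1d\bb I_d$ and observe that mutual unbiasedness is exactly the statement that the traceless subspaces attached to distinct bases are Hilbert--Schmidt orthogonal; the dependence relation then splits into per-basis components, each of which must vanish separately, and within a single basis the relation structure is transparent. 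Your approach buys conceptual clarity: it isolates precisely where the MUB hypothesis enters, and it is the same orthogonal-decomposition picture that underlies the classical bound of $d+1$ on the number of MUBs; the paper's computation, by contrast, is fully self-contained and needs no setup beyond the trace relations. One small point you should make explicit: your claim that $\sum_j\widetilde P_{i,j}=0$ is the \emph{only} relation among $\{\widetilde P_{i,j}\}_{j=1}^d$ (so that any $d-1$ of them are independent, and for the first basis $\sum_j a_j\widetilde P_{1,j}=0$ forces all $a_j$ equal) requires the observation that the $P_{i,j}$ for fixed $i$ are Hilbert--Schmidt orthonormal, hence linearly independent, and that $\bb I_d=\sum_j P_{i,j}$ spans the kernel of the map $A\mapsto A-\tfrac{\tr(A)}{d}\bb I_d$ restricted to their span; this is a one-line argument, but as written it is asserted rather than proved.
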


\begin{proof}
Since $|\cl P|=d^2$, it is sufficient to show that it is linearly independent in $\bb M_d$. Using the fact that $\{\cl V_i\}_{i=1}^{d+1}$ are mutually unbiased bases, for $1\leq i,k\leq d+1$ and $1\leq j,l\leq d$, we have \begin{align}\label{eq:MUB-rel}
\tr(P_{i,j}P_{k,l})  = |\inner{v_{i,j}}{v_{k,l}}|^2 = \begin{cases}
0 &\text{ if } i=k \text{ and } j\neq l, \\
1 &\text{ if } i=k \text{ and } j=l, \\
\frac{1}{d} &\text{ if } i\neq k.
\end{cases}
\end{align}

Suppose there exist scalars $\{\lambda_{i,j}:1\leq i\leq d+1, 1\leq j\leq d-1\}\cup \{\lambda_{1,d}\}$ such that \begin{align}\label{lemma-eq1}
\lambda_{1,d}P_{1,d} + \sum_{i=1}^{d+1}\sum_{j=1}^{d-1}\lambda_{i,j}P_{i,j}  = 0.
\end{align} Taking trace of Equation \eqref{lemma-eq1} yields \begin{align}\label{lemma-eq2}
\lambda_{1,d} + \sum_{i=1}^{d+1}\sum_{j=1}^{d-1}\lambda_{i,j}  = 0.
\end{align} Multiplying Equation \eqref{lemma-eq1} by $P_{1,l}$ for $1\leq l\leq d$, taking trace and using Relations \eqref{eq:MUB-rel}, we have \begin{align*}
0 &= \lambda_{1,d}|\inner{v_{1,d}}{v_{1,l}}|^2 + \sum_{i=1}^{d+1}\sum_{j=1}^{d-1} \lambda_{i,j} |\inner{v_{i,j}}{v_{1,l}}|^2   = \lambda_{1,l} + \frac{1}{d}\sum_{i=2}^{d+1} \sum_{j=1}^{d-1}\lambda_{i,j},
\end{align*} which implies \begin{align}\label{eq:lambdaL1}
\lambda_{1,l} = \frac{-1}{d}\sum_{i=2}^{d+1}\sum_{j=1}^{d-1}\lambda_{i,j}.
\end{align} Since the right hand side of Equation \eqref{eq:lambdaL1} is independent of $l$, we conclude that \begin{align}\label{eq:lambda1}
\lambda_{1,1} = \lambda_{1,2} =... =\lambda_{1,d}=:\lambda_1.
\end{align}

Similarly multiplying Equation \eqref{lemma-eq1} by $P_{k,l}$, where $2\leq k\leq d+1$ and $1\leq l\leq d-1$, taking trace, and using Relations \eqref{eq:MUB-rel} we arrive at \begin{align}\label{eq:lambdaKL}
\lambda_{k,l} = \frac{-1}{d}\left(\lambda_{1,d} + \sum_{\substack{i=1 \\ i\neq k}}^{d+1}\sum_{j=1}^{d-1}\lambda_{i,j}  \right).
\end{align} Again since the right hand side of Equation \eqref{eq:lambdaKL} is independent of $l$, we conclude that \begin{align}\label{eq:lambdak}
\lambda_{k,1} = \lambda_{k,2} = ... = \lambda_{k,d-1} =: \lambda_k.
\end{align}

However, for any $2\leq k\leq d+1$, using Equation \eqref{lemma-eq2}, \begin{align*}
\lambda_k = \lambda_{k,l} &= \frac{-1}{d}\left(\lambda_{1,d} + \sum_{\substack{i=1 \\ i\neq k}}^{d+1}\sum_{j=1}^{d-1}\lambda_{i,j}\right) = \frac{-1}{d}\left( \left(\lambda_{1,d} + \sum_{i=1}^{d+1}\sum_{j=1}^{d-1}\lambda_{i,j} \right)  -  \sum_{j=1}^{d-1}\lambda_{k,j}\right) \\
&= \frac{-1}{d}\left( -  \sum_{j=1}^{d-1}\lambda_{k,j}\right) = \frac{d-1}{d}\lambda_k,
\end{align*} which yields $\lambda_k=0$. Then Equation \eqref{lemma-eq1} reduces to \begin{align*}
0 = \lambda_{1,1}P_{1,1} + ... + \lambda_{1,d}P_{1,d} = \lambda_1(P_{1,1}+...+P_{1,d}) = \lambda_1\bb I_d,
\end{align*} and thus $\lambda_1=0$. Hence we have shown that $\lambda_{i,j}=0$ for all $1\leq i\leq d+1$ and $1\leq j\leq d-1$, and $\lambda_{1,d}=0$. \end{proof}

\begin{theorem}\label{MUB-to-Zauner}
For $1\leq i\leq d+1$, let $\cl V_i = \{v_{i,j}:1\leq j\leq d\}$ be a collection of $d+1$ mutually unbiased bases of $\bb C^d$. If $P_{i,j}=v_{i,j}v_{i,j}^*$ for all $i,j$, then the linear map $\Phi:\bb M_d\to \bb M_d$ defined by Equation \eqref{eq:channel-from-MUB} satisfies $\Phi = \ff Z$. Consequently, if $d+1$ mutually unbiased bases exist, then $\ebr(\ff Z) \le d(d+1)$.
\end{theorem}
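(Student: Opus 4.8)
The plan is to verify that $\Phi$ and $\ff Z$ agree on the rank-one projections $\{P_{k,l}\}$ and then invoke Lemma \ref{LI-set-MUB} to promote this to an equality of linear maps. First I would evaluate $\Phi$ on a single $P_{k,l}$. Since each $P_{i,j}=v_{i,j}v_{i,j}^*$ is a rank-one projection, a direct computation gives
\[ P_{i,j}P_{k,l}P_{i,j} = |\inner{v_{i,j}}{v_{k,l}}|^2\, P_{i,j} = \tr(P_{i,j}P_{k,l})\, P_{i,j}, \]
so that
\[ \Phi(P_{k,l}) = \frac{1}{d+1}\sum_{i=1}^{d+1}\sum_{j=1}^d \tr(P_{i,j}P_{k,l})\, P_{i,j}. \]

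The next step is to substitute the mutual unbiasedness relations \eqref{eq:MUB-rel}. The terms with $i=k$ contribute only the single surviving summand $P_{k,l}$, because the off-diagonal overlaps within the $k$-th basis vanish ($\tr(P_{k,j}P_{k,l})=0$ for $j\ne l$), while every term with $i\ne k$ carries the coefficient $\frac{1}{d}$. Collecting these and using that each orthonormal basis resolves the identity, $\sum_{j=1}^d P_{i,j}=\bb I_d$, I would obtain
\[ \Phi(P_{k,l}) = \frac{1}{d+1}\left( P_{k,l} + \frac{1}{d}\sum_{\substack{i=1\\ i\ne k}}^{d+1}\sum_{j=1}^d P_{i,j}\right) = \frac{1}{d+1}\left( P_{k,l} + \frac{1}{d}\cdot d\,\bb I_d\right) = \frac{1}{d+1}\left(P_{k,l} + \bb I_d\right). \]
The one point requiring care is the count of the second sum: the index $i$ ranges over $d+1$ bases, so excluding $i=k$ leaves exactly $d$ copies of $\bb I_d$, which is precisely what cancels the prefactor $\frac{1}{d}$. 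Since $\tr(P_{k,l})=1$, the right-hand side equals $\ff Z(P_{k,l})$, so $\Phi$ and $\ff Z$ agree on every $P_{k,l}$.

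In particular they agree on the subset $\cl P$, which by Lemma \ref{LI-set-MUB} is a basis of $\bb M_d$; since both $\Phi$ and $\ff Z$ are linear, this forces $\Phi=\ff Z$. For the stated consequence I would read the entanglement breaking rank directly off \eqref{eq:channel-from-MUB}: rewriting
\[ \Phi(X) = \sum_{i=1}^{d+1}\sum_{j=1}^d \left(\tfrac{1}{\sqrt{d+1}}P_{i,j}\right) X \left(\tfrac{1}{\sqrt{d+1}}P_{i,j}\right)^* \]
exhibits a Choi--Kraus representation with $d(d+1)$ rank-one operators, each $P_{i,j}$ being a rank-one projection. Hence $\ebr(\ff Z)=\ebr(\Phi)\le d(d+1)$. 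There is no serious obstacle here; the only bookkeeping to watch is the index count noted above together with the identity-resolution of each basis, both of which are exactly the inputs supplied by the MUB relations and Lemma \ref{LI-set-MUB}.
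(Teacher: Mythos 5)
Your proof is correct and takes essentially the same approach as the paper's: evaluate $\Phi$ on each $P_{k,l}$ using the relations \eqref{eq:MUB-rel} together with the resolution of the identity of each basis, then invoke Lemma \ref{LI-set-MUB} to conclude $\Phi = \ff Z$ by linearity, and read off the bound $\ebr(\ff Z)\le d(d+1)$ from the resulting rank-one Choi--Kraus representation. You merely make explicit some details the paper leaves implicit (the count of $d$ bases with $i\neq k$ and the rescaled Kraus operators $\tfrac{1}{\sqrt{d+1}}P_{i,j}$).
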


\begin{proof}
Clearly, the map $\Phi$ is completely positive, and it is straightforward to verify that it is trace preserving and unital. Using Relations \eqref{eq:MUB-rel}, it follows that for $1\leq k\leq d+1$ and $1\leq l\leq d$, \begin{align}\label{eq:Phi-on-Pij}
\Phi(P_{k,l}) &=  \frac{1}{d+1}\sum_{i=1}^{d+1}\sum_{j=1}^{d} |\inner{v_{i,j}}{v_{k,l}}|^2 P_{i,j} = \frac{1}{d+1}\left(P_{k,l} + \bb I_d\right) = \ff Z(P_{k,l}).
\end{align} Since a subset of $\{P_{k,l}:1\leq k\leq d+1,1 \leq k \leq d\}$ forms a basis for $\bb M_d$ as shown in Lemma \ref{LI-set-MUB}, it follows that $\Phi=\ff Z$. \end{proof}

In dimensions for which both a SIC POVM and $d+1$ MUB's exist, the above theorem provides examples of situations in which $\ff Z$ has two distinct Choi-Kraus representations which are not trivially related.

\section{acknowledgements}
We thank the referees for many helpful suggestions to improve the exposition of this article. SKP and JP were supported by Graduate Fellowships at the Department of Pure Mathematics, University of Waterloo. VIP is supported by the Natural Sciences and Engineering Research Council (NSERC) grant number 03784. MR is an Assistant Professor at Birla Institute of Technology and Science (Pilani - Goa Campus).


\begin{thebibliography}{99}
\bibitem{Appleby}
{\sc D.~M. Appleby},
{\it Symmetric informationally complete-positive operator valued measures and the extended {C}lifford group},
J. Math. Phys. (2005), vol.~46(5), 052107, 29.


\bibitem{BBRV}
{\sc S. Bandyopadhyay, P.~O. Boykin, V. Roychowdhury and F. Vatan},
{\it A new proof for the existence of mutually unbiased bases},
Algorithmica (2002), vol.~34(4), 512--528.


\bibitem{Busch}
{\sc P. Busch},
{\it Informationally complete sets of physical quantities},
Internat. J. Theoret. Phys. (1991), vol.~30(9), 1217-- 1227.


\bibitem{Caves99}
{\sc C. M. Caves},
{\it Symmetric informationally complete {POVM}s},
http://info.phys.unm.edu/~caves/reports/reports.html (1999).


\bibitem{CD}
{\sc L. Chen and D.~\v{Z} Dokovi\'{c}},
{\it Dimensions, lengths, and separability in finite-dimensional quantum systems},
J. Math. Phys. (2013), vol.~54(2), 022201, 13.


\bibitem{Choi74}
{\sc M. ~D. Choi},
{\it A {S}chwarz inequality for positive linear maps on {$C^*$}-algebras},
Illinois J. Math. (1974), vol.~18, 565--574.


\bibitem{Choi}
{\sc M. ~D. Choi},
{\it Completely positive linear maps on complex matrices},
Linear Algebra and Appl. (1975), vol.~10, 285--290.


\bibitem{WH2002}
{\sc R. F. Werner and A. S. Holevo},
{\it Counterexample to an additivity conjecture for output purity of quantum channels},
J. Math. Phys. (2002), vol.~43(9), 4353--4357.


\bibitem{DPS}
{\sc G. ~M. D'Ariano, P. Perinotti and M. F. Sacchi},
{\it Informationally complete measurements and group representation},
J. Opt. B: Quantum Semiclass. Opt. (2004), vol.~6(6), S487-- S491.


\bibitem{DTT}
{\sc D. P. DiVincenzo, B. M. Terhal and A. V. Thapliyal},
{\it Optimal decompositions of barely separable states},
J. Modern Opt. (2000), vol.~4(2-3), 377-- 385.


\bibitem{DEBZ}
{\sc T. Durt, B-G. Englert, I. Bengtsson, and K., {\.{Z}}yczkowski},
{\it On mutually unbiased bases},
Int. J. Quantum Inf. (2010), vol.~08(04), 535-- 640.


\bibitem{FFMV}
{\sc M. Fannes, B. Haegeman, M. Mosonyi, and D. Vanpeteghem},
{\it Additivity of minimal entropy output for a class of covariant channels},
quant-ph/0410195 (2014).


\bibitem{FHS}
{\sc C. Fuchs, M. Hoang, and B. Stacey},
{\it The {SIC} question: History and state of play},
Axioms (2017), vol.~6(4), 21.


\bibitem{GA}
{\sc G. Gour and A. Kalev},
{\it Construction of all general symmetric informationally complete measurements},
J. Phys. A (2014), vol.~47(33), 335302, 14.


\bibitem{Grassl}
{\sc M. Grassl},
{\it Tomography of quantum states in small dimensions},
Proceedings of the {W}orkshop on {D}iscrete {T}omography and its {A}pplications (2005), vol.~20, 151-- 164.


\bibitem{Hoggar}
{\sc Hoggar, S.~G.},
{\it {$64$} lines from a quaternionic polytope},
Geom. Dedicata (1998), vol.~69(3), 287-- 289.


\bibitem{HH}
{\sc M. Horodecki and P. Horodecki},
{\it Reduction criterion of separability and limits for a class of distillation protocols},
Phys. Rev. A (1999), vol.~59, 4206-- 4216.


\bibitem{HSR}
{\sc M. Horodecki, P.~W. Shor, and M.~B. Ruskai},
{\it Entanglement breaking channels},
Rev. Math. Phys. (2003), vol.~15(6), 629-- 641.


\bibitem{HJW}
{\sc L.~P. Hughston, R. Jozsa, and W. K. Wootters},
{\it A complete classification of quantum ensembles having a given density matrix},
Phys. Lett. A (1993), vol.~183(1), 14--18.


\bibitem{Ivan81}
{\sc I.~D. Ivanovi\'c},
{\it Geometrical description of quantal state determination},
J. Phys. A (1981), vol.~14(12), 3241-- 3245.


\bibitem{Kantor}
{\sc W. M. Kantor},
{\it M{UB}s inequivalence and affine planes},
J. Math. Phys. (2012), vol.~53(3), 032204, 9


\bibitem{LH}
{\sc L. Lami and M. Huber},
{\it Bipartite depolarizing maps},
J. Math. Phys. (2016), vol.~57(9), 092201, 19.


\bibitem{LS}
{\sc P. W.~H. Lemmens and J.~J. Seidel},
{\it Equiangular lines},
J. Algebra (1973), vol.~24, 494-- 512.
        

\bibitem{RBSC}
{\sc J.~M. Renes, R. Blume-Kohout, A.~J. Scott and C.~M. Caves},
{\it Symmetric informationally complete quantum measurements},
J. Math. Phys. (2004), vol.~45(6), 2171-- 2180.


\bibitem{Ruskai}
{\sc M. B. Ruskai},
{\it Qubit entanglement breaking channels},
Rev. Math. Phys. (2003), vol.~15(6), 643-- 662.


\bibitem{Scott2017}
{\sc A. J. Scott},
{\it SICs: Extending the list of solutions},
ArXiv:1703.03993 (2017-03)


\bibitem{SG}
{\sc A. J. Scott and M. Grassl},
{\it Symmetric informationally complete positive-operator-valued measures: a new computer study},
J. Math. Phys. (2010), vol.~51(4), 042203, 16.


\bibitem{Watr}
{\sc J. Watrous},
{\it The theory of quantum information},
Cambridge University Press, Cambridge (2018).


\bibitem{WF}
{\sc W. K. Wootters and B. D. Fields},
{\it Optimal state-determination by mutually unbiased measurements},
Ann. Physics (1989), vol.~191(2), 363--381.


\bibitem{Zauner1}
{\sc G. Zauner},
{\it Quantendesigns: Grundz\"{u}ge einer nichtkommutativen designtheorie},
Dissertation, Universit\"{a}t Wien (1999).
       

\bibitem{Zauner}
{\sc G. Zauner},
{\it Quantum designs: foundations of a noncommutative design theory},
Int. J. Quantum Inf. (2011), vol.~9(1), 445-- 507.
\end{thebibliography}
\end{document}